\def\natural{\mathbf{N}}
\newtheorem{theorem}{Theorem}[section]
\newtheorem{proposition}[theorem]{Proposition}
\newtheorem{corollary}[theorem]{Corollary}
\newtheorem{lemma}[theorem]{Lemma}
\newtheorem{definition}[theorem]{Definition}
\newtheorem{remark}[theorem]{Remark}
\def\cb{{\mathcal B}}
\def\ç{{\mathcal C}}
\def\ce{{\mathcal E}}
\def\cw{{\mathcal W}}
\def\bc{{\mathbb C}}
\def\bh{{\mathbb H}}
\def\bn{{\mathbb N}}
\def\br{{\mathbb R}}
\def\frak{\mathfrak}
\def\ga{{\frak A}}
\def\a{\alpha}
\def\b{\beta}
\def\tr{{\rm Tr}}
\def\L{\Lambda}
\def\G{\Gamma}
\def\ce{\mathcal E}
\def\ffi{\varphi}
\def\Tr{\mathrm{Tr}}
\def\<{\langle}
\def\>{\rangle}
\def\1{\mathbf{1}}
\def\ve{\varepsilon}
\def\cw{\cal W}
\def\cal{\mathcal}
\def\s{\sigma}
\def\bh{\mathbf{h}}
\def\bs{\mathbf{s}}
\def\id{{\bf 1}\!\!{\rm I}}
\begin{document}
\begin{center}
{\Large {\bf  Phase transitions for Quantum Markov Chains associated with Ising type models on a Cayley tree}}\\[1cm]
\end{center}

\begin{center}
{\large {\sc Farrukh Mukhamedov}}\\[2mm]
\textit{ Department of Computational \& Theoretical Sciences,\\
Faculty of Science, International Islamic University Malaysia,\\
P.O. Box, 141, 25710, Kuantan, Pahang, Malaysia}\\
E-mail: {\tt far75m@yandex.ru, \ farrukh\_m@iium.edu.my}
\end{center}
\begin{center}

{\sc Abdessatar Barhoumi}\\
\textit{Department of Mathematics\\
Nabeul Preparatory Engineering Institute\\
Campus Universitairy - Mrezgua - 8000 Nabeul,\\
Carthage University, Tunisia}\\
E-mail: {\tt abdessatar.barhoumi@ipein.rnu.tn}\\
\end{center}

\begin{center}
{\sc Abdessatar Souissi}\\
\textit{
Department of Mathematics,\\
Marsa Preparatory Institute for Scientific and Technical Studies\\
Carthage University, Tunisia}\\
E-mail: {\tt s.abdessatar@hotmail.fr}\\
\end{center}

\begin{abstract}
The main aim of the present paper is to prove the existence of a
phase transition in quantum Markov chain (QMC) scheme for the
Ising type models on a Cayley tree. Note that this kind of models
do not have one-dimensional analogous, i.e. the considered model
persists only on trees. In this paper, we provide a more general
construction of forward QMC. In that construction, a QMC is
defined as a weak limit of finite volume states with boundary
conditions, i.e. QMC depends on the boundary conditions. Our main
result states the existence of a phase transition for the Ising
model with competing interactions on a Cayley tree of order two.
By the phase transition we mean the existence of two distinct QMC
which are not quasi-equivalent and their supports do not overlap.
We also study some algebraic property of the disordered phase of
the model, which is a new phenomena even in a classical setting.

\vskip 0.3cm \noindent {\it Mathematics Subject Classification}:
46L53, 60J99, 46L60, 60G50, 82B10, 81Q10, 94A17.\\
{\it Key words}: Quantum Markov chain; Cayley tree; Ising type
model; competing interaction,  phase transition, quasi-equivalence, disordered phase.
\end{abstract}

\section{Introduction }\label{intr}

One of the basic open problems in quantum probability is the
construction of a theory of quantum Markov fields, that are
quantum processes with multi-dimensional index set. This program
concerns the generalization of the theory of Markov fields (see
\cite{D},\cite{Geor})) to a non-commutative setting, naturally
arising in quantum statistical mechanics and quantum field theory.

The quantum analogues of Markov chains were first constructed in
\cite{[Ac74f]}, where the notion of quantum Markov chain (QMC) on
infinite tensor product algebras was introduced. The reader is
referred to \cite{AW,fannes2,ILW,Mat,OP} and the references cited
therein, for recent developments of the theory and the applications.

The main aim of the present paper is to prove the existence of
phase transitions for a class of quantum Markov chains associated
with Ising type models on a Cayley tree. This paper deals with two
problems: The construction of quantum Markov fields on homogeneous
(Cayley) trees; the existence of a phase transition for special
models of such fields. Both problems are non-trivial and, to a
large extent, open. In fact, even if several definitions of
quantum Markov fields on trees (and more generally on graphs) have
been proposed, a really satisfactory, general theory is still
missing and physically interesting examples of such fields in
dimension $d\geq2$ are very few. In this paper, from QMC
perspective, we are going to establish the phase transition for
the classical Ising model with competing interactions on a Cayley
tree. Note that the phase transition notion is based on the
quasi-equivalence of QMC which differs from the classical one (in
the classical setting, to establish the phase transition for the
considered model, it is sufficient to prove the existence of at
least two different solutions of associated renormalized equations
(see \cite{GPW,Roz2})). Therefore, such a phase transition is
purely noncommutative, and even for classical models, to check the
existence of the phase transition is not a trivial problem (we
point out that the quasi-equivalence of product states, which
correspond to the classical models without interactions, was
considered in \cite{PS}).

We notice that first attempts to construct a quantum analogue of
classical Markov fields have been done in
\cite{[AcFi03]}-\cite{[AcFi01b]},\cite{AcLi,GZ,FM,[Liebs99]}. In
these papers the notion of {\it quantum Markov state}, introduced
in \cite{[AcFr80]}, extended to fields as a sub-class of the
quantum Markov chains. In \cite{[AcFiMu07]} a more general
definition of quantum Markov states and chains, including all the
presently known examples, have been extended. Note that in the
mentioned papers quantum Markov fields were considered over
multidimensional integer lattice $\mathbb{Z}$. This lattice has
so-called amenability property. Moreover, analytical solutions
(for example, critical temperature)does not exist on such lattice.
But investigations of phase transitions of spin models on
hierarchical lattices showed that there are exact calculations of
various physical quantities (see for example, \cite{Bax,Per}).
Such studies on the hierarchical lattices begun with the
development of the Migdal-Kadanoff renormalization group method
where the lattices emerged as approximants of the ordinary crystal
ones.  On the other hand, the study of exactly solved models
deserves some general interest in statistical mechanics
\cite{Bax}. Therefore, it is natural to investigate quantum Markov
fields over hierarchical lattices.  For example, a Cayley tree is
the simplest hierarchical lattice with non-amenable graph
structure \cite{Ost}. This means that the ratio of the number of
boundary sites to the number of interior sites of the Cayley tree
tends to a nonzero constant in the thermodynamic limit of a large
system. Nevertheless, the Cayley tree is not a realistic lattice,
however, its amazing topology makes the exact calculations of
various quantities possible. First attempts to investigate QMC
over such trees was done in \cite{aklt}, such studies were related
to the investigation of thermodynamic limit of valence-bond-solid
models on a Cayley tree \cite{fannes} (see also \cite{AOM}).

The phase transition phenomena is crucial for quantum models over
multi dimensional lattices \cite{BCS},\cite{FS},\cite{Sach,BR2}.
In \cite{ArE} it was considered quantum phase transition for the
two-dimensional Ising model using $C^*$-algebra approach.  In
\cite{fannes} the VBS-model was considered on the Cayley tree. It
was established the existence of  the phase transition, for the
model in terms of finitely correlated states, which describes
ground states of the model. Note that more general structure of
finitely correlated states was studied in \cite{fannes2}. We
stress that finitely correlated states can be considered as
quantum Markov chains. In \cite{GM2004,Mukh04,M2000,MR1,MR2}
noncommutative extensions of classical Markov fields, associated
with Ising and Potts models on a Cayley tree, were investigated.
In the classical case, Markov fields on trees are also considered
in \cite{[Pr],Roz2},\cite{{Spa}}-\cite{[Za85]}.

There are several methods to investigate phase transition from
mathematical point of view. Roughly speaking, for a given
Hamiltonian on a quasi local algebra, to establish the existence of
a phase transition it is necessary to find at least two different
KMS-states associated with a model (see for details \cite{BR2}). In
\cite{fannes} it was proposed to study the phase transition in the
class of finitely correlated states. In the present paper, for a
given Hamiltonian we provide a more general construction (than
\cite{AMSa0,AMSa}) of QMC associated with the Hamiltonian. Namely,
in this construction, the Hamiltonian exhibits nearest-neighbor and
next-nearest-neighbor interactions (in the previous papers
\cite{AMSa} the Hamiltonian contained only nearest-neighbor
interactions), and the corresponding QMC is defined as a weak limit
of finite volume states (which depend on the Hamiltonian) with
boundary conditions, i.e. QMC depends on the boundary conditions. We
remark that in this construction, one can observe some similarities
with Gibbs measures. We stress that all previous considered examples
(in the literature) of QMC are related to Hamiltonians with
nearest-neighbor interactions. A main aim of the present paper is to
prove the existence of the phase transition in the class of QMC when
the Hamiltonian contains both kinds (nearest-neighbor and
next-nearest-neighbor) of interactions at the same time. Note that
this kind of models do not have one-dimensional analogous, i.e. the
considered model persists only on trees. In classical setting, this
kind of model was called the Ising model with competing
interactions, and has been rigorously investigated in many papers
(see for example \cite{GPW,MR1,MR2,OMM,Roz2,RRah}. Our main result
is the following theorem

\begin{theorem}\label{Main}
For the Ising model with competing interactions \eqref{1Kxy1},
\eqref{1Hxy1}, $J>0$, $\b>0$ on the Cayley tree of order two, the
following statements hold:
\begin{itemize}
\item[(i)] if $\Delta(\theta)\leq0$, then there is a unique QMC;

\item[(ii)] if $\Delta(\theta)>0$, then there occurs a phase
transition.
\end{itemize}
Here $\Delta(\theta)=\theta^{J}(\theta^{2}-3)-2\theta$,
$\theta=e^{2\beta}$.
\end{theorem}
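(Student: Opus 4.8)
The plan is to analyze the renormalization (fixed-point) equations governing the finite-volume states with boundary conditions, and then translate the multiplicity of solutions into the quasi-equivalence/support statements that define a QMC phase transition. First, I would write down explicitly the transition operators attached to the nearest-neighbor interaction $\Kxy$ in \eqref{1Kxy1} and the next-nearest-neighbor interaction $\Hxy$ in \eqref{1Hxy1}, and set up the forward QMC as the weak limit of the finite-volume states constructed from these operators with a fixed boundary condition. Because the Cayley tree has order two, each vertex has exactly two forward neighbors, so the compatibility (consistency) conditions for the boundary-dependent finite-volume states reduce to a finite system of equations in a small number of parameters — essentially a pair of "effective field" parameters $(h_x)$ associated to each site, which by homogeneity collapse to a single scalar equation. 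Carrying out the Migdal–Kadanoff-type elimination, I expect the self-consistency equation to take the form $h = F_\theta(h)$ where $F_\theta$ is an explicit rational function of $\theta = e^{2\beta}$ and of $\theta^{J}$ coming from the competing next-nearest-neighbor coupling.

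Second, I would study the number of solutions of $h = F_\theta(h)$. The symmetric solution $h = h^*$ (corresponding to the "disordered" phase) always exists; the question is whether additional solutions bifurcate off it. Linearizing at the symmetric fixed point, stability is governed by $F_\theta'(h^*)$, and the threshold $|F_\theta'(h^*)| = 1$ is exactly where new solutions appear. I expect a direct computation to show that this derivative condition is equivalent to the sign of $\Delta(\theta) = \theta^{J}(\theta^{2}-3) - 2\theta$: when $\Delta(\theta) \le 0$ the symmetric fixed point is the only solution, giving uniqueness of the QMC, while for $\Delta(\theta) > 0$ there are (at least) two further solutions, symmetric to each other, producing two distinct translation-invariant QMC. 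This is the computational heart of part (i) versus the first half of part (ii).

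Third — and this is where the argument becomes genuinely noncommutative and, I expect, the main obstacle — I must upgrade "two distinct solutions of the renormalized equations" to "two QMC that are not quasi-equivalent and whose supports do not overlap." In the classical setting the former already suffices, but here one has to work at the level of the von Neumann algebras generated by the GNS representations of the two limiting states. The plan is to exhibit, for each of the two states $\varphi_1, \varphi_2$, a sequence of local observables (built from the spin projections along branches of the tree, suitably averaged over the first $n$ levels) whose $\varphi_i$-expectations converge to different limits in such a way that the difference does not vanish in the respective weak closures; concretely, I would use a law-of-large-numbers / ergodic-type argument along the tree to produce a central element (an element of the center of the weak closure, or a limit of local averages lying in different spectral subspaces) distinguishing the two representations. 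Since non-quasi-equivalence is implied once one finds such a macroscopic observable taking a.s. distinct values in the two states — and disjointness of supports follows from the supports living in orthogonal spectral projections of that observable — this will close part (ii). I anticipate the delicate points to be (a) checking that the two non-symmetric fixed points indeed yield \emph{genuinely} different states and not merely a relabelling under a symmetry of the tree, and (b) verifying the convergence of the local averages in the weak topology rather than merely in the state, which requires control of the finite-volume approximations uniform in the boundary condition.
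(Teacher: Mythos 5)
Your first two steps track the paper's strategy (reduce to translation-invariant consistency equations for a boundary field $h$, locate the threshold $\Delta(\theta)$), but there is a gap already in how you propose to prove uniqueness in part (i). A linearization criterion $|F_\theta'(h^*)|=1$ at the symmetric fixed point only detects solutions \emph{bifurcating off} $h^*$; it cannot by itself exclude asymmetric solutions lying away from the symmetric one when $\Delta(\theta)\le 0$, so it does not yield the global uniqueness claimed in (i). The paper instead solves the consistency equations exactly: for diagonal $h$ the system reduces to $\Tr(h)=\tau_1\Tr(h)^2+\tau_2\Tr(\sigma h)^2$ and $\Tr(\sigma h)=\tau_3\Tr(h)\Tr(\sigma h)$, so any asymmetric solution forces $\Tr(h)=1/\tau_3$ and $(h_{11}-h_{22})^2=(\tau_3-\tau_1)/(\tau_2\tau_3^2)$, which is solvable precisely when $\tau_3>\tau_1$, i.e.\ $\Delta(\theta)>0$. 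The threshold falls out of an exact quadratic computation, not a stability analysis, and also note that the reduced system is two-dimensional (in $\Tr(h)$ and $\Tr(\sigma h)$), not the single scalar equation you anticipate.

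The more serious gap is in your third step, which is where the theorem actually lives: you correctly identify that "two solutions" must be upgraded to "not quasi-equivalent with non-overlapping supports," but you leave this as a plan whose key objects are never produced. The paper's route is entirely concrete and avoids the machinery you invoke. For non-quasi-equivalence it first notes that $\varphi_1,\varphi_2$ are mixing under tree translations, hence factor states, and then applies the Bratteli--Robinson criterion (Corollary 2.6.11 of \cite{BR}): factor states are quasi-equivalent iff their difference vanishes uniformly on observables localized arbitrarily far from the root. It then evaluates both states on the single-site projection $e_{11}$ placed at the first vertex of $W_n$, using an explicit $2\times 2$ transfer-matrix recursion for the quantities $\hat\psi_n,\check\psi_n$, and shows $|\varphi_1(E_{\Lambda_n})-\varphi_2(E_{\Lambda_n})|\ge I_1-o(1)$ with $I_1>0$ independent of $n$. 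For non-overlapping supports it uses the explicit projections $p_n=\bigotimes_{x\in\Lambda_n}e_{11}^{(x)}$ and $q_n=\bigotimes_{x\in\Lambda_n}e_{22}^{(x)}$ and computes $\varphi_1(p_n),\varphi_2(q_n)\to 1$ while $\varphi_1(q_n),\varphi_2(p_n)\to 0$ (in the regime $\beta\to\infty$). Your proposed "macroscopic/central observable obtained from a law of large numbers along the tree" could in principle give disjointness of factor states, but you neither identify the observable nor address the weak-operator convergence of tree averages, which you yourself flag as delicate; as written, the noncommutative half of part (ii) — the actual content of the theorem — is missing.
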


By the phase transition we mean the existence of two distinct QMC
for the given family of interaction operators $\{K_{<x,y>}\}$,
$\{L_{>x,y<}\}$ (see \eqref{1Kxy1}, \eqref{1Hxy1}). Moreover, these
states should be not quasi-equivalent and their supports do not
overlap. Note that in our earlier papers (see \cite{AMSa2,AMSa3}) we
have proved only non quasi-equivalence of the states. In this paper,
we additionally prove that the corresponding states do not have
overlapping supports. Hence, the main result of the present paper
recover a main result of \cite{AMSa3} as a particular case ($J=0$).
To prove the main result of the paper, we first establish that the
model exhibits three translation-invariant QMC $\varphi_\a$,
$\varphi_1$ and $\varphi_2$, and we study several properties of the
states $\varphi_1$ and $\varphi_2$.

We notice that the state $\varphi_\a$ corresponds to the
disordered phase of the model. In \cite{B90,Iof} it was
established that the disordered phase of the Ising model on the
Cayley tree of order $k\geq 2$ is extremal if and only if
$\theta<1/\sqrt{k}$. But for the Ising model with competing
interactions this kind of result still is unknown. In the present
paper, we find sheds some light into this question. Namely, we
will prove the following result.

\begin{theorem}\label{main2}
The states $\varphi_\a$ and $\varphi_1$ are not quasi-equivalent.
\end{theorem}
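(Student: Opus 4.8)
The plan is to show that the GNS representations of $\varphi_\a$ and $\varphi_1$ generate von Neumann algebras that cannot be $*$-isomorphic in a way compatible with the states, by exhibiting a \emph{local} quantitative invariant that separates them. Concretely, I would compare the two states on a sequence of local observables supported on larger and larger balls $\Lambda_n$ of the Cayley tree and show that the ratio of the corresponding finite-volume partition-function-like normalizations behaves differently: for $\varphi_1$ (an ordered phase built from the solution $\theta^2-3>0$, i.e. $\Delta(\theta)>0$) the boundary contribution does not die out in the thermodynamic limit, while for the disordered state $\varphi_\a$ it does. The standard route is Araki's criterion / the theory of quasi-equivalence: two factor states are quasi-equivalent iff their GNS representations are unitarily equivalent up to multiplicity, which forces the associated modular structures (hence certain asymptotic ratios of expectation values of projections localized near the boundary) to coincide. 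So I would first recall, from the construction in the earlier sections, the explicit transfer-operator form of $\varphi_\a$ and $\varphi_1$: each is a weak limit $\lim_n \omega^{(n)}_{\mathbf{b}}$ where $\omega^{(n)}_{\mathbf b}$ is a finite-volume state on $\mathcal B_{\Lambda_n}$ determined by the interaction operators $\{K_{<x,y>}\}$, $\{L_{>x,y<}\}$ from \eqref{1Kxy1}, \eqref{1Hxy1} and a boundary condition $\mathbf b$; the three translation-invariant QMC correspond to the three fixed points of the associated renormalization map, of which $\varphi_\a$ is the symmetric (disordered) one and $\varphi_1$, $\varphi_2$ the two symmetry-broken ones.

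The key steps, in order, would be: (1) Reduce quasi-equivalence of $\varphi_\a$ and $\varphi_1$ to a statement about the restrictions $\varphi_\a|_{\mathcal B_{W_n}}$ and $\varphi_1|_{\mathcal B_{W_n}}$ to the algebra of the $n$-th level $W_n$ of the tree, using that both states are QMC (Markov) so that their behavior is controlled by a single-site/transition structure — this is exactly the setting where one can test quasi-equivalence by looking at whether $\sum_n \|\varphi_\a|_{\mathcal B_{W_n}} - \varphi_1|_{\mathcal B_{W_n}}\|$ or the relevant fidelity product converges. (2) Compute, using the transfer operators, the fidelity (or the Hellinger-type overlap) between the two finite-volume density matrices restricted to $W_n$; because the tree branches, $|W_n|$ grows like $2^n$, so this overlap is an $|W_n|$-fold product of a single local overlap factor $c<1$ coming from the difference between the disordered fixed point and the ordered fixed point $h_1\neq h_\a$ of the renormalized equation. (3) Conclude that this overlap tends to $0$ super-exponentially, hence by the Bures/Kakutani-type dichotomy for QMC the two states are disjoint, in particular not quasi-equivalent. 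In step (2) the crucial input is that $h_1 \neq h_\a$ strictly, which holds precisely because we are in the regime $\Delta(\theta)>0$ where the renormalization map has three distinct fixed points — this is where Theorem \ref{Main}(ii) is used.

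I would also set up the alternative, more algebraic argument as a backup: show that $\varphi_1$ is \emph{not} a factor state symmetric under the sign-flip symmetry $\sigma$ of the model whereas $\varphi_\a$ is, and that $\varphi_\a = \tfrac12(\varphi_1+\varphi_2)$ would, if $\varphi_\a$ and $\varphi_1$ were quasi-equivalent, force $\varphi_2$ to be quasi-equivalent to $\varphi_1$ as well and $\varphi_\a$ to be a factor state — contradicting that $\varphi_1,\varphi_2$ are the two extremal ergodic components of $\varphi_\a$ with mutually orthogonal (non-overlapping) supports, which is established in the proof of Theorem \ref{Main}. The main obstacle I anticipate is making the local-overlap computation in step (2) genuinely rigorous on the tree: one has to control the boundary corrections in the weak limit defining each QMC and verify that the relevant conditional density matrices really do factor through the fixed-point vectors $h_\a$, $h_1$ of the renormalization map uniformly in $n$, rather than just in a formal transfer-matrix sense; handling the boundary-condition dependence of the QMC (the states depend on $\mathbf b$) and showing the chosen representatives are the translation-invariant ones is the delicate part. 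Everything else — the growth $|W_n|\sim 2^n$, the strict inequality $c<1$, and the passage from vanishing overlap to non-quasi-equivalence — is comparatively routine given the machinery recalled above.
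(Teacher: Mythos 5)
Your general framework is the right one --- both states are factor states, so by Theorem \ref{br-q} it suffices to exhibit observables localized arbitrarily far from the root on which the two states differ by a fixed fraction of the norm --- but both of your concrete routes contain genuine gaps. In the main route, step (2) assumes that the overlap (fidelity) between $\varphi_\a|_{\cb_{W_n}}$ and $\varphi_1|_{\cb_{W_n}}$ factorizes as an $|W_n|$-fold product of a single local factor $c<1$. That factorization is valid only for product states; the restrictions of these QMC to a level $W_n$ carry correlations inherited from common ancestors, so no such product formula holds, and you give no substitute estimate. Worse, what you are trying to prove there --- that $\varphi_\a$ and $\varphi_1$ become asymptotically orthogonal, i.e.\ have non-overlapping supports --- is precisely the extremality/singularity question for the disordered phase, which the introduction of the paper explicitly flags as unknown for this model. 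This is why Theorem \ref{main2} asserts only non-quasi-equivalence and, unlike the $\varphi_1$ versus $\varphi_2$ case, makes no claim about supports; a strategy that goes through disjointness is aiming at something stronger than what is established (and the ``routine'' passage from vanishing overlap to non-quasi-equivalence is in fact the least of the difficulties). Your backup argument also fails at the outset: $\varphi_\a$ is \emph{not} equal to $\tfrac12(\varphi_1+\varphi_2)$, and $\varphi_1,\varphi_2$ are not its ergodic components. The state $\varphi_\a$ is itself a mixing, hence factor, state arising from a third, genuinely distinct fixed point $h_\alpha$ of \eqref{eq2}, exactly as in the classical picture where the free-boundary Gibbs measure is not the average of the plus and minus measures.

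The missing idea is much more elementary than either of your routes: test the two states on the single-site magnetization $\sigma^{(x)}$ at a vertex $x$ of level $n$. A transfer-operator computation (Propositions \ref{7.1} and \ref{7.2}) shows that $\varphi_\a(\sigma^{(x)})=0$ for every such $x$ (the boundary condition $h_\alpha$ is a multiple of the identity, so the trace against $\sigma$ vanishes at every step), whereas $\varphi_1(\sigma^{(x)})=\varepsilon_0+O\big((\tfrac{\tau_1}{\tau_3}-\tfrac12)^{n}\big)$ with $\varepsilon_0>0$ in the regime $\Delta(\theta)>0$. Hence the two states differ by at least $\varepsilon_0/2$ on observables localized outside any prescribed finite volume, and Theorem \ref{br-q} applies directly. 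No control of the full restricted density matrices, fidelities, or ergodic decompositions is needed.
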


This result shows how the states relate to each other, which is
even a new phenomena in the classical setting.

Let us outline the organization of the paper. After preliminary
information (see Section 2), in Section 3 we provide a general
construction of quantum Markov chains on Cayley tree. This
construction is more general than considered in \cite{AMSa2}.
Moreover, in this section we give the definition of the phase
transition. Using the provided construction, in Section 4 we
consider the Ising model with competing interactions on the Cayley
tree of order two. Section 5 is devoted to the existence of the
three translation-invariant QMC $\varphi_\a$, $\varphi_1$ and
$\varphi_1$ corresponding to the model. Section 6 contains the
proof of Theorem \ref{Main}, namely, we first prove that states
$\varphi_1$ and $\varphi_2$ do not have overlapping supports, and
they are not quasi-equivalent. We stress that the states
$\varphi_1$ and $\varphi_2$ are not product states, and therefore,
their non quasi equivalence is independent of interest from
operator algebras point of view. In fact, in the case of product
states, many papers were devoted to the study of quasi-equivalence
\cite{BP,Mat0,PS}.
 In the final Section 7, we prove Theorem \ref{main2}.

\section{Preliminaries}

Let $\Gamma^k_+ = (L,E)$ be a semi-infinite Cayley tree of order
$k\geq 1$ with the root $x^0$ (i.e. each vertex of $\Gamma^k_+$
has exactly $k+1$ edges, except for the root $x^0$, which has $k$
edges). Here $L$ is the set of vertices and $E$ is the set of
edges. The vertices $x$ and $y$ are called {\it nearest neighbors}
and they are denoted by $l=<x,y>$ if there exists an edge
connecting them. A collection of the pairs
$<x,x_1>,\dots,<x_{d-1},y>$ is called a {\it path} from the point
$x$ to the point $y$. The distance $d(x,y), x,y\in V$, on the
Cayley tree, is the length of the shortest path from $x$ to $y$.

Recall a coordinate structure in $\G^k_+$:  every vertex $x$
(except for $x^0$) of $\G^k_+$ has coordinates $(i_1,\dots,i_n)$,
here $i_m\in\{1,\dots,k\}$, $1\leq m\leq n$ and for the vertex
$x^0$ we put $(0)$.  Namely, the symbol $(0)$ constitutes level 0,
and the sites $(i_1,\dots,i_n)$ form level $n$ (i.e. $d(x^0,x)=n$)
of the lattice (see Fig. 1).

\begin{figure}
\begin{center}
\includegraphics[width=10.07cm]{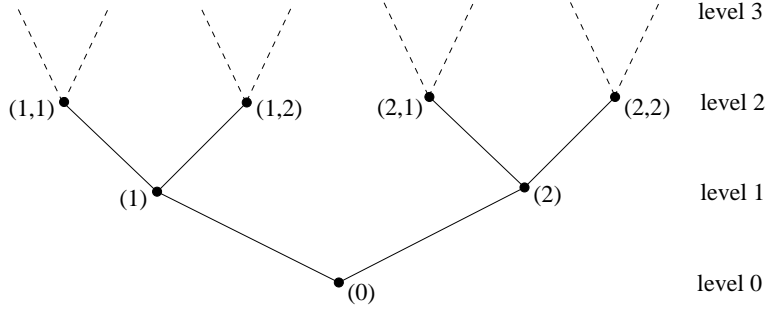}
\end{center}
\caption{The first levels of $\G_+^2$} \label{fig1}
\end{figure}

Let us set
\[
W_n = \{ x\in L \, : \, d(x,x_0) = n\} , \qquad \Lambda_n =
\bigcup_{k=0}^n W_k, \qquad  \L_{[n,m]}=\bigcup_{k=n}^mW_k, \
(n<m)
\]
\[
E_n = \big\{ <x,y> \in E \, : \, x,y \in \Lambda_n\big\}, \qquad
\Lambda_n^c = \bigcup_{k=n}^\infty W_k
\]
For $x\in \G^k_+$, $x=(i_1,\dots,i_n)$ denote
$$ S(x)=\{(x,i):\ 1\leq
i\leq k\}.
$$
Here $(x,i)$ means that $(i_1,\dots,i_n,i)$. This set is called a
set of {\it direct successors} of $x$.

Two vertices $x,y\in V$ is called {\it one level
next-nearest-neighbor  vertices} if there is a vertex $z\in V$
such that  $x,y\in S(z)$, and they are denoted by $>x,y<$. In this
case the vertices $x,z,y$ was called {\it ternary} and denoted by
$<x,z,y>$.

Let us define on $\G^k_+$ a binary operation
$\circ:\G^k_+\times\G^k_+\to\G^k_+$ as follows: for any two
elements $x=(i_1,\dots,i_n)$ and $y=(j_1,\dots,j_m)$ put
\begin{equation}\label{binar1}
x\circ
y=(i_1,\dots,i_n)\circ(j_1,\dots,j_m)=(i_1,\dots,i_n,j_1,\dots,j_m)
\end{equation}
and
\begin{equation}\label{binar2}
x\circ x^0=x^0\circ x= (i_1,\dots,i_n)\circ(0)=(i_1,\dots,i_n).
\end{equation}

By means of the defined operation $\G^k_+$ becomes a
noncommutative semigroup with a unit. Using this semigroup
structure one defines translations $\tau_g:\G^k_+\to \G^k_+$,
$g\in \G^k_+$ by
\begin{equation}\label{trans1}
\tau_g(x)=g\circ x.
\end{equation}
It is clear that $\tau_{(0)}=id$.

 The algebra of observables $\cb_x$ for any single site
$x\in L$ will be taken as the algebra $M_d$ of the complex $d\times
d$ matrices. The algebra of observables localized in the finite
volume $\L\subset L$ is then given by
$\cb_\L=\bigotimes\limits_{x\in\L}\cb_x$. As usual if
$\L^1\subset\L^2\subset L$, then $\cb_{\L^1}$ is identified as a
subalgebra of $\cb_{\L^2}$ by tensoring with unit matrices on the
sites $x\in\L^2\setminus\L^1$. Note that, in the sequel, by
$\cb_{\L,+}$ we denote the set of all positive elements of $\cb_\L$
(note that an element is positive if its spectrum is located in
$\br_+$). The full algebra $\cb_L$ of the tree is obtained in the
usual manner by an inductive limit
$$
\cb_L=\overline{\bigcup\limits_{\L_n}\cb_{\L_n}}.
$$

In what follows, by ${\cal S}({\cal B}_\L)$ we will denote the set
of all states defined on the algebra ${\cal B}_\L$.

Consider a triplet ${\cal C} \subset {\cal B} \subset {\cal A}$ of
unital $C^*$-algebras. Recall \cite{ACe} that a {\it
quasi-conditional expectation} with respect to the given triplet
is a completely positive (CP) linear map $\ce \,:\, {\cal A} \to
{\cal B}$ such that $ \ce(ca) = c \ce(a)$, for all $a\in {\cal
A},\, c \in {\cal C}$.

\begin{definition}[\cite{AOM}]\label{QMCdef}
A state $\varphi$ on ${\cal B}_L$ is called a {\it forward quantum
Markov chain (QMC)}, associated to $\{\L_n\}$, if for each
$\Lambda_n$, there exist a quasi-conditional expectation
$\ce_{\Lambda_n^c}$ with respect to the triplet
\begin{equation}\label{trplt1}
{\cal B}_{{\Lambda}_{n+1}^c}\subseteq {\cal
B}_{\Lambda_n^c}\subseteq{\cal B}_{\Lambda_{n-1}^c}
\end{equation}
and a state $ \hat\varphi_{\Lambda_n^c}\in{\cal S}({\cal
B}_{\Lambda_n^c}) $ such that for any $n\in {\mathbb N}$ one has
\begin{equation}\label{eq4.1re}
\hat\varphi_{\Lambda_n^c}| {\cal
B}_{\Lambda_{n+1}\backslash\Lambda_n} =
\hat\varphi_{\Lambda_{n+1}^c}\circ \ce_{\Lambda_{n+1}^c}| {\cal
B}_{\Lambda_{n+1}\backslash\Lambda_n}
\end{equation}
and
\begin{equation}\label{dfgqmf}
\varphi = \lim_{n\to\infty} \hat\varphi_{\Lambda_n^c}\circ
\ce_{\Lambda_n^c}\circ \ce_{\Lambda_{n-1}^c} \circ \cdots \circ
\ce_{\Lambda_1^c}
\end{equation}
in the weak-* topology.
\end{definition}

Note that \eqref{eq4.1re} is an analogue of the DRL equation from
classical statistical mechanics \cite{D, Geor}, and QMC is thus
the counterpart of the infinite-volume Gibbs measure.

\section{Construction of Quantum Markov Chains on Cayley tree}\label{dfcayley}

In this section we are going to provide a construction of a
forward quantum Markov chain which contain competing interactions.
Note that in our construction generalizes our previous works
\cite{AMSa,AMSa2}.

Let us rewrite the elements of $W_n$ in the following order, i.e.
\begin{eqnarray*}
\overrightarrow{W_n}:=\left(x^{(1)}_{W_n},x^{(2)}_{W_n},\cdots,x^{(|W_n|)}_{W_n}\right),\quad
\overleftarrow{W_n}:=\left(x^{(|W_n|)}_{W_n},x^{(|W_n|-1)}_{W_n},\cdots,
x^{(1)}_{W_n}\right).
\end{eqnarray*}
Note that $|W_n|=k^n$. Vertices
$x^{(1)}_{W_n},x^{(2)}_{W_n},\cdots,x^{(|W_n|)}_{W_n}$ of $W_n$
can be represented in terms of the coordinate system as follows
\begin{eqnarray}\label{xw}
&&x^{(1)}_{W_n}=(1,1,\cdots,1,1), \quad x^{(2)}_{W_n}=(1,1,\cdots,1,2), \ \ \cdots \quad x^{(k)}_{W_n}=(1,1,\cdots,1,k,),\\
&&x^{(k+1)}_{W_n}=(1,1,\cdots,2,1), \quad
x^{(2)}_{W_n}=(1,1,\cdots,2,2), \ \ \cdots \quad
x^{(2k)}_{W_n}=(1,1,\cdots,2,k),\nonumber
\end{eqnarray}
\[\vdots\]
\begin{eqnarray*}
&&x^{(|W_n|-k+1)}_{W_n}=(k,k,,\cdots,k,1), \
x^{(|W_n|-k+2)}_{W_n}=(k,k,\cdots,k,2),\ \ \cdots
x^{|W_n|}_{W_n}=(k,k,\cdots,k,k).
\end{eqnarray*}

Analogously, for a given vertex $x,$ we shall use the following
notation for the set of direct successors of $x$:
\begin{eqnarray*}
\overrightarrow{S(x)}:=\left((x,1),(x,2),\cdots (x,k)\right),\quad
\overleftarrow{S(x)}:=\left((x,k),(x,k-1),\cdots (x,1)\right).
\end{eqnarray*}

In what follows, by $^\circ\prod$ we denote an ordered product,
i.e.
$$
^\circ\prod_{k=1}^n  a_k=a_1a_2\cdots a_n,
$$
where elements $\{a_k\}\subset {\mathcal{B}}_L$ are multiplied in
the indicated order. This means that we are not allowed to change
this order.

Notice that each vertex $x\in L$ has interacting vertices $\{x,
(x,1),\dots,(x,k)\}$. Assume that each edges $<x,(x,i)>$
($i=1,\dots,k$) operators
$K_{<x,(x,i)>}\in\mathcal{B}_{x}\otimes\mathcal{B}_{(x, i)}$ is
assigned, respectively. Moreover, for each competing vertices
$>(x,i),(x,i+1)<$ and $<x,(x,i),(x,i+1)>$ ($i=1,\dots,k$) the
following operators are assigned:
 $$L_{>(x,i),(x,i+1)<}\in\mathcal{B}_{(x,i)}\otimes\mathcal{B}_{(x,i+1)}, \ \
 M_{(x,(x,i),(x,i+1))}\in\mathcal{B}_{x}\otimes\mathcal{B}_{(x,i)}\otimes\mathcal{B}_{(x,i+1)}.$$

We would like to define a state on $\mathcal{B}_{\Lambda_n}$ with
boundary conditions $\omega_{0}\in\mathcal{B}_{(0),+}$ and
$\{h^{x}\in\mathcal{B}_{x,+}:\ x\in L\}$. Note that the boundary
conditions have similar interpretations like classical ones (i.e.
having boundary spins parallel either all up or all down) but in
more general setting. For example, if we consider an Ising model on
a Cayley tree, then the boundary conditions for this model are
defined by functions $\{h_x\}_{x\in V}$, which in our setting
correspond to $\{\exp(h_x)\}_{x\in V}$. For this reason, we are
considering positive elements $h^{x}\in\mathcal{B}_{x,+}$ as the
boundary condition. For more information about the boundary
conditions related to classical models we refer \cite{GRR,Roz2}.

For each $n\in \natural$ denote
\begin{eqnarray}\label{K1}
&& A_{x,(x,1),\dots,(x,k)}=\big(^\circ\prod_{i=1}^k
K_{x,(x,i)}\big)\big(^\circ\prod_{i=1}^k
L_{>(x,i),(x,i+1)<}\big)\big(^\circ\prod_{i=1}^k
M_{(x,(x,i),(x,i+1))}\big),\\[2mm]
\label{K11} &&K_{[m,m+1]}:=\prod_{x\in \overrightarrow
W_{m}}A_{x,(x,1),\dots,(x,k)}, \ \ 1\leq m\leq n,\\[2mm]
 \label{K2} &&
\bh_{n}^{1/2}:=\prod_{x\in\overrightarrow
W_n}(h^{x})^{1/2}, \ \ \  \bh_n=\bh_{n}^{1/2}(\bh_{n}^{1/2})^*\\[2mm]
\label{K3}
&&{\mathbf{K}}_n:=\omega_{0}^{1/2}\prod_{m=1}^{n-1}K_{[m,m+1]}\bh_{n}^{1/2}\\[2mm]
\label{K4} && \mathcal{W}_{n]}:={\mathbf{K}}_n{\mathbf{K}}_n^{*}
\end{eqnarray}
One can see that ${\cw}_{n]}$ is positive.

In what follows, by $\tr_{\L}:\cb_L\to\cb_{\L}$ we mean normalized
partial trace (i.e. $\tr_{\L}(\id_{L})=\id_{\L}$, here
$\id_{\L}=\bigotimes\limits_{y\in \L}\id$), for any
$\Lambda\subseteq_{\text{fin}}L$. For the sake of shortness we put
$\tr_{n]} := \tr_{\Lambda_n}$.

Let us define a positive functional $\ffi^{(n)}_{w_0,\bh}$ on
$\cb_{\Lambda_n}$ by
\begin{eqnarray}\label{ffi-ff}
\ffi^{(n)}_{w_0,\bh}(a)=\tr(\cw_{n+1]}(a\otimes\id_{W_{n+1}})),
\end{eqnarray}
for every $a\in \cb_{\Lambda_n}$. Note that here, $\tr$ is a
normalized trace on ${\cal B}_L$ (i.e. $\tr(\id_{L})=1$).

To get an infinite-volume state $\ffi$ on $\cb_L$  such that
$\ffi\lceil_{\cb_{\L_n}}=\ffi^{(n)}_{w_0,\bh}$, we need to impose
some constrains to the boundary conditions $\big\{w_0,\bh\big\}$
so that the functionals $\{\ffi^{(n)}_{w_0,\bh}\}$ satisfy the
compatibility condition, i.e.
\begin{eqnarray}\label{compatibility}
\ffi^{(n+1)}_{w_0,\bh}\lceil_{\cb_{\L_n}}=\ffi^{(n)}_{w_0,\bh}.
\end{eqnarray}

In the following we need an auxiliary fact.

\begin{lemma}\label{LL}
Let $\Lambda\subseteq\Lambda'\subseteq_{fin}L$, then for any
$A\in\mathcal{B}_{\Lambda}, B\in\mathcal{B}_{\Lambda'}$ one has
$\Tr (AB)=\Tr[A\Tr_{\mathcal{B}_{\Lambda}}(B)]$.
\end{lemma}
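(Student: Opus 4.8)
The plan is to reduce the general case to the defining property of the partial trace by a standard density/linearity argument. First I would recall that both sides of the claimed identity $\Tr(AB)=\Tr[A\,\Tr_{\mathcal{B}_\Lambda}(B)]$ are linear in $B$ and (separately) linear in $A$, and that $\Tr$ is the normalized trace on $\cb_L$ while $\Tr_{\mathcal{B}_\Lambda}:\cb_{\Lambda'}\to\cb_\Lambda$ is the normalized partial trace onto $\cb_\Lambda$. Since $\cb_{\Lambda'}=\bigotimes_{x\in\Lambda'}\cb_x$ is finite-dimensional, it suffices to verify the identity on a spanning family of simple tensors $B=\bigotimes_{x\in\Lambda'} B_x$ with $B_x\in\cb_x$, and likewise take $A=\bigotimes_{x\in\Lambda} A_x$ a simple tensor in $\cb_\Lambda$ (viewed inside $\cb_{\Lambda'}$ by tensoring with identities on $\Lambda'\setminus\Lambda$, and inside $\cb_L$ by tensoring with identities on $L\setminus\Lambda'$).

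The key computation is then purely multiplicative. Write $\Lambda'=\Lambda\sqcup(\Lambda'\setminus\Lambda)$. For the simple tensors above, $AB=\bigotimes_{x\in\Lambda}(A_xB_x)\otimes\bigotimes_{x\in\Lambda'\setminus\Lambda}B_x$, so by multiplicativity of the (normalized) trace over tensor factors,
\begin{equation*}
\Tr(AB)=\prod_{x\in\Lambda}\tr_x(A_xB_x)\cdot\prod_{x\in\Lambda'\setminus\Lambda}\tr_x(B_x),
\end{equation*}
where $\tr_x$ denotes the normalized trace on $\cb_x=M_d$. On the other hand, by the definition of the normalized partial trace, $\Tr_{\mathcal{B}_\Lambda}(B)=\big(\prod_{x\in\Lambda'\setminus\Lambda}\tr_x(B_x)\big)\bigotimes_{x\in\Lambda}B_x$, hence
\begin{equation*}
\Tr\big[A\,\Tr_{\mathcal{B}_\Lambda}(B)\big]=\Big(\prod_{x\in\Lambda'\setminus\Lambda}\tr_x(B_x)\Big)\prod_{x\in\Lambda}\tr_x(A_xB_x),
\end{equation*}
which is the same expression. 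Extending by bilinearity over the spanning tensors gives the identity for arbitrary $A\in\cb_\Lambda$, $B\in\cb_{\Lambda'}$.

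There is no real obstacle here; the only thing to be careful about is bookkeeping of the identifications (embedding $\cb_\Lambda\hookrightarrow\cb_{\Lambda'}\hookrightarrow\cb_L$ by tensoring with units) and the normalization conventions, so that the factors $\tr_x(\id)=1$ do not accidentally introduce dimension constants. I would phrase the final write-up so that the tensor-factorization of $\Tr$ and the definition of $\Tr_{\mathcal{B}_\Lambda}$ are invoked explicitly, making the cancellation transparent.
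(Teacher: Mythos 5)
Your argument is correct: reducing to simple tensors, using multiplicativity of the normalized trace over tensor factors, and checking that the factor $\prod_{x\in\Lambda'\setminus\Lambda}\tr_x(B_x)$ appears identically on both sides is exactly the standard verification, and your attention to the normalization $\tr_x(\id)=1$ is the only point where one could slip. The paper itself states Lemma \ref{LL} without proof, treating it as a known auxiliary fact, so there is nothing to compare against; your write-up would serve as a perfectly adequate proof to include.
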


\begin{theorem}\label{compa} Assume that for every $x\in L$ and triple $\{x,(x,i),(x,i+1)\}$ $(i=1,\dots,k-1$) the operators
$K_{<x,(x,i)>}$, $L_{>(x,i),(x,i+1)<}$, $M_{(x,(x,i),(x,i+1))}$
are given as above. Let the boundary conditions $w_{0}\in {\cal
B}_{(0),+}$ and ${\bh}=\{h_x\in {\cal B}_{x,+}\}_{x\in L}$ satisfy
the following conditions:
\begin{eqnarray}\label{eq1}
&&\Tr(\omega_{0}h^{(0)})=1, \\
\label{eq2}
&&\Tr_{x]}\big({A_{x,(x,1),\dots,(x,k)}}^\circ\prod_{i=1}^k
h^{(x,i)}A_{x,(x,1),\dots,(x,k)}^{*}\big)=h^x, \ \ \textrm{for
every}\ \ x\in L,
\end{eqnarray}
where as before $A_{x,(x,1),\dots,(x,k)}$ is given by \eqref{K1}.
Then the functionals $\{\ffi^{(n)}_{w_0,\bh}\}$ satisfy the
compatibility condition \eqref{compatibility}. Moreover, there is
a unique forward quantum Markov chain $\ffi_{w_0,{\bh}}$ on
$\cb_L$ such that
$\ffi_{w_0,{\bh}}=w-\lim\limits_{n\to\infty}\ffi^{(n)}_{w_0,\bh}$.
\end{theorem}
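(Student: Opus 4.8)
The plan is to prove the compatibility condition \eqref{compatibility} first, and then to deduce the existence of the QMC from it. The central technical tool is Lemma \ref{LL}, which allows one to ``localize'' the trace: whenever an operator lives on a region $\Lambda'$ but we pair it against an observable supported on a smaller region $\Lambda$, we may first take the partial trace down to $\cb_\Lambda$. To establish \eqref{compatibility}, I would start from the definition \eqref{ffi-ff}, which expresses $\ffi^{(n+1)}_{w_0,\bh}$ via $\cw_{n+2]}$, and I would compute $\tr_{W_{n+2}}\bigl(\cw_{n+2]}\bigr)$, i.e.\ integrate out the last shell $W_{n+2}$. Writing $\mathbf{K}_{n+2}=\mathbf{K}_{n+1}'\,K_{[n+1,n+2]}\,\bh_{n+2}^{1/2}$ where $\mathbf{K}_{n+1}'$ denotes the portion of the product not involving $W_{n+2}$, the key point is that $K_{[n+1,n+2]}=\prod_{x\in\overrightarrow{W_{n+1}}}A_{x,(x,1),\dots,(x,k)}$ acts on $\cb_{\Lambda_{n+2}\setminus\Lambda_n}$ and, crucially, the factors $A_{x,(x,1),\dots,(x,k)}$ for distinct $x\in W_{n+1}$ act on disjoint subsystems apart from their anchor sites in $W_{n+1}$. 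Therefore the partial trace over $W_{n+2}$ factorizes over $x\in W_{n+1}$, and on each factor condition \eqref{eq2} replaces $A_{x,(x,1),\dots,(x,k)}\bigl({}^\circ\!\prod_i h^{(x,i)}\bigr)A^*_{x,(x,1),\dots,(x,k)}$ by $h^x$. This collapses $\cw_{n+2]}$ (after the partial trace) precisely to $\cw_{n+1]}$, which is exactly the statement that $\ffi^{(n+1)}_{w_0,\bh}$ restricted to $\cb_{\Lambda_n}$ agrees with $\ffi^{(n)}_{w_0,\bh}$. The base of the induction, the normalization $\tr(\cw_{1]})=1$, is handled separately by \eqref{eq1} together with the normalization of $\omega_0$ and of the trace.

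Once \eqref{compatibility} holds, the family $\{\ffi^{(n)}_{w_0,\bh}\}$ is a projective family of states on the local algebras $\cb_{\Lambda_n}$, each of which is positive (as noted, $\cw_{n]}\ge 0$) and normalized (by the $n=1$ case applied inductively). By the standard inductive-limit argument for $C^*$-algebras, a compatible family of states on an increasing net of subalgebras whose union is dense in $\cb_L$ extends uniquely to a state $\ffi_{w_0,\bh}$ on $\cb_L$ with $\ffi_{w_0,\bh}\lceil_{\cb_{\Lambda_n}}=\ffi^{(n)}_{w_0,\bh}$; this extension is by construction the weak-$*$ limit of the $\ffi^{(n)}_{w_0,\bh}$, giving $\ffi_{w_0,\bh}=w\text{-}\lim_n\ffi^{(n)}_{w_0,\bh}$.

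It remains to check that $\ffi_{w_0,\bh}$ is a \emph{forward quantum Markov chain} in the sense of Definition \ref{QMCdef}, i.e.\ to exhibit the quasi-conditional expectations $\ce_{\Lambda_n^c}$ and the auxiliary states $\hat\varphi_{\Lambda_n^c}$ satisfying \eqref{eq4.1re} and \eqref{dfgqmf}. Here I would define $\ce_{\Lambda_n^c}$ using the transition operators built from $K_{[n,n+1]}$: concretely, for $a\in\cb_L$ set $\ce_{\Lambda_n^c}(a)=\tr_{W_{n+1}}\bigl(K_{[n,n+1]}\,a\,K_{[n,n+1]}^*\bigr)$ conditioned appropriately on $\cb_{\Lambda_{n-1}^c}$, and check that it is completely positive, identity-preserving, and satisfies the module property $\ce_{\Lambda_n^c}(ca)=c\,\ce_{\Lambda_n^c}(a)$ for $c\in\cb_{\Lambda_{n+1}^c}$, which follows because $K_{[n,n+1]}$ commutes with $\cb_{\Lambda_{n+1}^c}$. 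The compatibility identity \eqref{eq4.1re} is then just a reformulation of condition \eqref{eq2}, and \eqref{dfgqmf} follows by iterating and passing to the limit, reproducing the $\cw_{n]}$-representation of $\ffi^{(n)}_{w_0,\bh}$.

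The main obstacle, and where care is needed, is the bookkeeping in the factorization of the partial trace $\tr_{W_{n+2}}(\cw_{n+2]})$: one must verify that the ordered products ${}^\circ\!\prod$ in \eqref{K1}–\eqref{K3} are arranged so that, when the outermost shell is traced out, the operators $A_{x,(x,1),\dots,(x,k)}$ indexed by different $x\in W_{n+1}$ genuinely act on commuting tensor factors (their successor sets $S(x)$ are disjoint), so that Lemma \ref{LL} applies factor-by-factor and \eqref{eq2} can be invoked. A secondary subtlety is that \eqref{eq2} is stated as an equality of positive operators on $\cb_x$, and one must track that the $\omega_0^{1/2}$, the intermediate $\bh_m^{1/2}$, and the $K_{[m,m+1]}$ factors sandwich correctly so that the recursion actually terminates at $\cw_{n+1]}$ rather than at some twisted version of it; this is essentially an induction on $n$ peeling off one shell at a time.
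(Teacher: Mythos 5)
Your proposal is correct and follows essentially the same route as the paper: you establish projectivity $\tr_{n]}(\cw_{n+1]})=\cw_{n]}$ by factorizing the outermost-shell product over $x$ in the last interior level (using that the successor sets $S(x)$ are disjoint, so the factors $A_{x,(x,1),\dots,(x,k)}$ live in commuting subalgebras) and invoking \eqref{eq2} factor-by-factor, with \eqref{eq1} giving normalization, and you then exhibit the quasi-conditional expectations as $a\mapsto\tr_{W_{n+1}}(K_{[n,n+1]}\,a\,K_{[n,n+1]}^*)$ exactly as in the paper's \eqref{E-n1}--\eqref{E-n2}. The only difference is presentational: the paper phrases the first step as projectivity of the density operators $\cw_{n]}$ and cites that this implies compatibility, whereas you verify the restriction identity directly via Lemma \ref{LL}; these are the same computation.
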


\begin{proof}
We first show that a sequence $\{\cw_{n]}\}$ is {\it projective}
with respect to $\tr_{n]}$, i.e.
\begin{equation}\label{pro}
\tr_{n-1]}(\cw_{n]})=\cw_{n-1]}, \ \ \forall n\in\bn.
\end{equation}
It is known \cite{[AcFr80]} that the projectivity implies the
compatibility condition.

Now let us check the equality \eqref{pro}. From
\eqref{K1}-\eqref{K4} one has
\begin{eqnarray*}
\cw_{n]}&=&
w_0^{1/2}\bigg(\prod_{m=1}^{n-1}K_{[m-1,m]}\bigg)K_{[n-1,n]}\bh_{n}K_{[n-1,n]}^*
\bigg(\prod_{m=1}^{n-1}K_{[m-1,m]}\bigg)^* w_0^{1/2}.
\end{eqnarray*}

One can see that for different $x$ and $x'$ taken from $W_{n-1}$
the algebras $\cb_{x\cup S(x)}$ and  $\cb_{x'\cup S(x')}$ are
commuting, therefore from \eqref{K2} one finds
\begin{eqnarray*}
K_{[n-1,n]}\bh_{n}K_{[n-1,n]}^*= \prod_{x\in
\overrightarrow{W}_{n-1}}\big({A_{x,(x,1),\dots,(x,k)}}^\circ\prod_{i=1}^k
h^{(x,i)}A_{x,(x,1),\dots,(x,k)}^{*}\big)
\end{eqnarray*}

Hence, from the last equality with \eqref{eq2} we get
\begin{eqnarray*}
\tr_{n-1]}(\cw_{n]})&=&
w_0^{1/2}\bigg(\prod_{m=1}^{n-1}K_{[m-1,m]}\bigg)\\
&&\times \prod_{x\in
\overrightarrow{W}_{n-1}}\tr_{x]}\bigg({A_{x,(x,1),\dots,(x,k)}}^\circ\prod_{i=1}^k
h^{(x,i)}A_{x,(x,1),\dots,(x,k)}^{*}\bigg)\\
&&\times\bigg(\prod_{m=1}^{n-1}K_{[m-1,m]}\bigg)^* w_0^{1/2}
\\
&=&w_0^{1/2}\bigg(\prod_{m=1}^{n-1}K_{[m-1,m]}\bigg)\prod_{x\in
\overrightarrow{W}_{n-1}}h^x\bigg(\prod_{m=1}^{n-1}K_{[m-1,m]}\bigg)^*
w_0^{1/2}\\
&=&\cw_{n-1]}.
\end{eqnarray*}

From the above argument and \eqref{eq1}, one can show that
$\cw_{n]}$ is density operator, i.e. $\tr(\cw_{n]})=1$.

Let us show that the defined state $\ffi_{w_0,{\bh}}$ is a forward
QMC. Indeed, define quasi-conditional expectations $\ce_{\L_n^c}$
as follows:
\begin{eqnarray}\label{E-n1}
&&\hat\ce_{\L_1^c}(x_{[0})=\tr_{[1}(K_{[0,1]}w_0^{1/2}x_{[0}w_0^{1/2}K_{[0,1]}^*), \ \ x_{[0}\in \cb_{\L_0^c}\\
&&\label{E-n2}
\ce_{\L_k^c}(x_{[k-1})=\tr_{[n}(K_{[k-1,k]}x_{[k-1}K_{[k-1,k]}^*),
\ \ x_{[k-1}\in\cb_{\L_{k-1}^c}, \ \ k=1,2,\dots,n+1,
\end{eqnarray}
here $\tr_{[n}=\tr_{\L_n^c}$. Then for any monomial
$a_{\L_1}\otimes a_{W_2}\otimes\cdots\otimes
a_{W_{n}}\otimes\id_{W_{n+1}}$, where
$a_{\L_1}\in\cb_{\L_1},a_{W_k}\in\cb_{W_k}$, ($k=2,\dots,n$), we
have
\begin{eqnarray}\label{E-n2}
\ffi^{(n)}_{w_0,\bh}(a_{\L_1}\otimes a_{W_2}\otimes\cdots\otimes
a_{W_{n}})&=&\tr\bigg(\bh_{n+1}K_{[n,n+1]}^*\cdots
K_{[0,1]}^*w_0^{1/2}(a_{\L_1}\otimes a_{W_2}\otimes\cdots\otimes
a_{W_{n}})\nonumber\\
&&w_0^{1/2}K_{[0,1]}\cdots K_{[n,n+1]}\bigg)\nonumber\\
&=&\tr_{[1}\bigg(\bh_{n+1}K_{[n,n+1]}^*\cdots
K^*_{[1,2]}\hat\ce_{\L_1^c}(a_{\L_1})a_{W_2}K_{[1,2]}\nonumber\\
&&\cdots a_{W_{n}}K_{[n,n+1]}\bigg)\nonumber\\
&=&\tr_{[n+1}\big(\bh_{n+1}\ce_{\L_{n+1}^c}\circ\ce_{\L_n^c}\circ\cdots\nonumber\\
&& \ce_{\L_2^c}\circ\hat\ce_{\L_1^c}(a_{\L_1}\otimes
a_{W_2}\otimes\cdots\otimes a_{W_{n}})\big).
\end{eqnarray}

Hence, for any $a\in\L\subset\L_{n+1}$ from \eqref{ffi-ff} with
\eqref{K11},\eqref{K2}, \eqref{E-n1}-\eqref{E-n2} one can see that
\begin{equation}\label{qqq}
\ffi^{(n)}_{w_0,\bh}(a)=\tr_{[n+1}\big(\bh_{n+1}\ce_{\L_{n+1}^c}\circ\ce_{\L_n^c}\circ\cdots
\ce_{\L_2^c}\circ\hat\ce_{\L_1^c}(a)\big).
\end{equation}
The projectivity of ${\mathcal{W}}_{n]}$ yields the equality
\eqref{eq4.1re} for $\ffi^{(n)}_{w_0,\bh}$, therefore, from
\eqref{qqq} we conclude that $\ffi_{w_0,\bh}$ is a forward QMC.

\end{proof}

\begin{corollary}\label{state^nwithW_n}
If \eqref{eq1},\eqref{eq2} are satisfied then one has
$\ffi^{(n)}_{w_0,\bh}(a)=\tr(\cw_{n]}(a))$ for any $a\in
\cb_{\Lambda_n}$.
\end{corollary}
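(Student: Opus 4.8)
The plan is to unwind the definition of $\ffi^{(n)}_{w_0,\bh}$ given in \eqref{ffi-ff} and reduce the statement about $\cw_{n+1]}$ to one about $\cw_{n]}$ by means of the projectivity relation \eqref{pro} already established in the proof of Theorem~\ref{compa}. Concretely, fix $a\in\cb_{\Lambda_n}$. By definition $\ffi^{(n)}_{w_0,\bh}(a)=\tr(\cw_{n+1]}(a\otimes\id_{W_{n+1}}))$, where $\tr$ is the normalized trace on $\cb_L$. Since $a\otimes\id_{W_{n+1}}$ lives in $\cb_{\Lambda_n}\subseteq\cb_{\Lambda_{n+1}}$ while $\cw_{n+1]}\in\cb_{\Lambda_{n+1}}$, I would like to push the partial trace onto $\cw_{n+1]}$ alone; this is exactly what Lemma~\ref{LL} provides.

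First I would apply Lemma~\ref{LL} with $\Lambda=\Lambda_n$ (or $\Lambda_{n}\cup W_{n+1}$, depending on where $a$ is taken to sit) and $\Lambda'=\Lambda_{n+1}$, writing
\begin{eqnarray*}
\tr\big(\cw_{n+1]}(a\otimes\id_{W_{n+1}})\big)
=\tr\big((a\otimes\id_{W_{n+1}})\,\tr_{\Lambda_{n}}(\cw_{n+1]})\big).
\end{eqnarray*}
Here I am using that $a\otimes\id_{W_{n+1}}$ and $\cw_{n+1]}$ commute under the trace, or more carefully, applying Lemma~\ref{LL} in the form stated with the roles of the two algebra-localized operators chosen so that the hypothesis $A\in\cb_\Lambda$, $B\in\cb_{\Lambda'}$ is met. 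Then I would invoke the projectivity \eqref{pro}, namely $\tr_{n]}(\cw_{n+1]})=\cw_{n]}$, which is valid precisely because the hypotheses \eqref{eq1} and \eqref{eq2} are assumed. This collapses the right-hand side to $\tr(\cw_{n]}\,a)$ once we note that $\id_{W_{n+1}}$ has been traced out and $a$ now pairs with $\cw_{n]}\in\cb_{\Lambda_n}$. Keeping track of normalization constants of $\tr$ versus $\tr_{n]}$ is the only bookkeeping needed, and these are consistent because all traces here are normalized.

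The main obstacle, such as it is, is purely notational: one must be careful that the index in \eqref{ffi-ff} is $n+1$ while the claim is phrased with $\cw_{n]}$, so a single application of projectivity is what bridges the two; and one must correctly match the localization hypothesis of Lemma~\ref{LL} (the operator $a\otimes\id_{W_{n+1}}$ should be viewed in $\cb_{\Lambda_n}$ so that tracing over $\Lambda_n$ leaves only the trace pairing of $a$ with $\tr_{n]}(\cw_{n+1]})$). There is no genuine analytic difficulty: the corollary is essentially a restatement of \eqref{pro} together with the trace identity of Lemma~\ref{LL}, so the proof is a two-line computation. I would therefore present it as: apply Lemma~\ref{LL}, then \eqref{pro}, and conclude.
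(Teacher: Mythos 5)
Your argument is correct and is exactly the intended one: the paper gives no separate proof of this corollary, leaving it as an immediate consequence of the definition \eqref{ffi-ff}, Lemma \ref{LL} (with $A=a\in\cb_{\Lambda_n}$ and $B=\cw_{n+1]}\in\cb_{\Lambda_{n+1}}$), and the projectivity relation \eqref{pro} established in the proof of Theorem \ref{compa}. Nothing is missing.
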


\begin{remark} If one takes
$L_{>(x,i),(x,i+1)<}=\id$, $M_{(x,(x,i),(x,i+1))}=\id$ for all
$x\in L$, then we get a QMC constructed in \cite{AMSa}. Therefore,
the provided construction extensions ones given in
\cite{AMSa,AMSa2}.
\end{remark}

Our goal in this paper is to establish the existence of phase
transition for the given family of operators. Heuristically, the
phase transition means the existence of two distinct QMC. Let us
provide a more exact definition.

\begin{definition}
We say that there exists a phase transition for a family of
operators $\{K_{<x,(x,i)>}\}$, $\{L_{>(x,i),(x,i+1)<}\}$,
$\{M_{(x,(x,i),(x,i+1))}\}$ if the following conditions are
satisfied:
\begin{enumerate}
\item[(a)] {\sc existence}: The equations \eqref{eq1}, \eqref{eq2}
have at least two $(u_0,\{h^x\}_{x\in L})$ and $(v_0,\{s^x\}_{x\in
L})$ solutions;

\item[(c)] {\sc not overlapping supports}: there is a projector
$P\in B_L$ such that $\ffi_{u_0,\bh}(P)<\ve$ and
$\ffi_{v_0,\bs}(P)>1-\ve$, for some $\ve>0$.

\item[(b)] {\sc non quasi-equivalence}: the corresponding quantum
Markov chains $\ffi_{u_0,\bh}$ and $\ffi_{v_0,\bs}$ are not quasi
equivalent\footnote{Recall that a representation $\pi_1$ of a
$C^*$-algebra $\ga$ is \textit{normal} w.r.t. another
representation $\pi_2$, if there is a normal $*$- epimorphism
$\rho:\pi_2(\ga)''\to \pi_1(\ga)''$ such that
$\rho\circ\pi_2=\pi_1$. Two representations $\pi_1$ and $\pi_2$
are called \textit{quasi-equivalent} if $\pi_1$ is normal w.r.t.
$\pi_2$, and conversely, $\pi_2$ is normal w.r.t. $\pi_1$. This
means that there is an isomorphism  $\alpha:\pi_1(\ga)''\to
\pi_2(\ga)''$ such that $\alpha \circ\pi_1=\pi_2$. Two states
 $\varphi$ and $\psi$ of $\ga$ are said be \textit{quasi-equivalent} if
 the GNS representations $\pi_\varphi$ and $\pi_\psi$ are
 quasi-equivalent \cite{BR}.}.

\end{enumerate}
Otherwise, we say there is no phase transition.
\end{definition}

\begin{remark} We stress that in \cite{AMSa2} we have first introduced a notion of the phase
transition quantum Markov chains. That definition is contained only
(a) and (b) conditions. After some discussions it was observed that
these two conditions are not sufficient for the existence of the
phase transition. Since, the non quasi equivalence of the states
does not imply that the states are different. Namely, it might
happen that one of the states could be absolutely continuous to
another one. In general, to have a phase transition the states
should not be absolutely continuous to each other. Therefore, the
third condition (c) is imposed in the present definition.
\end{remark}

%%%%%%%%%%%%%%%%%%%%%%%%%%%%%%%%%%%%%%%%%%%%%%%%%%%%%%%%%%%%%%%%%%%%%%%%%%%%%%%%%%%%%%%%%%%%%%%%%%%%%%%%%%%%

\section{QMC associated with Ising model with competing interactions}\label{exam1}
%%%%%%%%%%%%%%%%%%%%%%%%%%%%%%%%%%%%%%%%%%%%%%%%%%%%%%%%%%%%%%%%%%%%%%%%%%%%%%%%%%%%%%%%%%%%%%%%%%%%%%%%%%%%%%%%%ù

In this section, we define the model and formulate the main
results of the paper. In what follows we consider a semi-infinite
Cayley tree $\G^2_+=(L,E)$ of order two. Our starting
$C^{*}$-algebra is the same $\cb_L$ but with $\cb_{x}=M_{2}(\bc)$
for all $x\in L$. Denote
\begin{equation}\label{pauli} \id^{(u)}=\left(
          \begin{array}{cc}
            1 & 0 \\
            0 & 1 \\
          \end{array}
        \right), \quad
        \         \
\sigma^{(u)}= \left(
          \begin{array}{cc}
            1 & 0 \\
            0 & -1 \\
          \end{array}
        \right).
\end{equation}

For every vertices  $(x,(x,1),(x,2))$  we put
\begin{eqnarray}\label{1Kxy1}
&&K_{<x,(x,i)>}=\exp\{\b H_{x,(x,i)>}\}, \ \ i=1,2,\ \b>0,\\[2mm] \label{1Lxy1} &&
L_{>(x,1),(x,2)<}=\exp\{J\beta H_{>(x,1),(x,2)<}\}, \ \ J>0,
\end{eqnarray}
where
\begin{eqnarray}\label{1Hxy1}
&&
H_{<x,(x,i)>}=\frac{1}{2}\big(\id^{x)}\id^{(x,i)}+\sigma^{(x)}\sigma^{(x,i)}\big),\\[2mm]
\label{1H>xy<1} &&
H_{>(x,1),(x,2)<}=\frac{1}{2}\big(\id^{(x,1)}\id^{(x,2)}+\sigma^{(x,1)}\sigma^{(x,2)}\big).
\end{eqnarray}

Furthermore, we assume that $M_{(x,(x,i),(x,i+1))}=\id$
($i=1,2,\dots,k$) for all $x\in L$.

The defined model is called  the {\it Ising model with competing
interactions} per vertices  $(x,(x,1),(x,2))$.

\begin{remark}
Note that if we take $J=0$, then one gets the Ising model on Cayley
tree which has been studied in \cite{AMSa3}. In \cite{MBS1} we have
studied the classical Ising model with competing interactions in
comparison with quantum analogous.
\end{remark}

One can calculate that
\begin{eqnarray}\label{1Hxym}
&&H_{<u,v>}^{m}=H_{<u,v>}=\frac{1}{2}\big(\id^{(u)}\id^{(v)}+\sigma^{(u)}\sigma^{(v)}\big),\\[2mm]
\label{1Hxym}
&&H_{>x,y<}^{m}=H_{>x,y<}=\frac{1}{2}\big(\id^{(x)}\id^{(y)}+\sigma^{(x)}\sigma^{(y)}\big).
\end{eqnarray}
Therefore, one finds
\begin{eqnarray}\label{K<u,v>K_0K_3}
&&K_{<u,v>}=K_0\id^{(u)}\id^{(v)}+K_3\sigma^{(u)}\sigma^{(v)},\\[2mm]
&&\label{L>u,v<R_0R_3}
L_{>u,v<}=R_0\id^{(u)}\id^{(v)}+R_3\sigma^{(u)}\sigma^{(v)},
\end{eqnarray}
where
\begin{eqnarray*}
&&K_0=\frac{\exp{\beta}+1}{2},\ \ \   K_3=\frac{\exp{\beta}-1}{2},\\[2mm]
&&R_0=\frac{\exp{(J\beta)}+1}{2}, \ \ \
R_3=\frac{\exp{(J\beta)}-1}{2}.
\end{eqnarray*}
Hence, from \eqref{K1} for each $x\in L$ we obtain
\begin{eqnarray}\label{Ax}
A_{(x,(x,1),(x,2))}&=&\gamma\id^{(x)}\otimes\id^{(x,1)}\otimes\id^{(x,2)}+\delta\sigma^{(x)}\otimes\sigma^{(x,1)}\otimes\id^{(x,2)}\nonumber\\[2mm]
&&+
\delta\sigma^{(x)}\otimes\id^{(x,1)}\otimes\sigma^{(x,2)}+\eta\id^{(x)}\otimes\sigma^{(x,1)}\otimes\sigma^{(x,2)},
\end{eqnarray}
where
\begin{equation}
\left\{
  \begin{array}{ll}
    \gamma=K_{0}^{2}R_{0}+K_{3}^{2}.R_{3}=\frac{1}{4}[\exp{(J+2)\beta}+\exp{J\beta}+2\exp{\beta}], \\
    \\
    \delta=K_{0}K_{3}(R_{0}+R_{3})=\frac{1}{4}\exp{J\beta}[\exp{2\beta}-1],  \\
    \\
    \eta=K_{0}^{2}R_{3}+K_{3}^{2}R_{0}=\frac{1}{4}[\exp{(J+2)\beta}+\exp{J\beta}-2\exp{\beta}]. \\
  \end{array}
\right.
\end{equation}

Recall that a function $\{h^x\}$ is called
\textit{translation-invariant} if one has $h^{x}=h^{\tau_g(x)}$,
for all $x,g\in \G^2_+$. Clearly, this is equivalent to $h^x=h^y$
for all $x,y\in L$.

In what follows, we restrict ourselves to the description of
translation-invariant solutions of \eqref{eq1},\eqref{eq2}.
Therefore, we assume that: $ h^{x}=h$ for all $x\in L$, where
$$
h= \left(
\begin{array}{cc}
            h_{11} & h_{12} \\
            h_{21} & h_{22} \\
          \end{array}
\right).$$

Then we have
\begin{eqnarray*}\label{AHHA}
&& A_{(x,(x,1),(x,2))}\times[\id^{(x)}\otimes h^{(x,1)}\otimes h^{(x,2)}]\times A_{(x,(x,1),(x,2))} \nonumber\\
\\
&=&\big[\gamma^2\id\otimes h\otimes h + \delta^2\id\otimes \sigma h\sigma \otimes h + \delta^2\id\otimes h \otimes \sigma h \sigma +
\eta^2\id\otimes\sigma h \sigma\otimes\sigma h \sigma\big]\nonumber\\
\\
&+&\big[\gamma \eta\id\otimes h \sigma\otimes h \sigma + \gamma
\eta\id\otimes\sigma h\otimes
\sigma h + \delta^2\id\otimes\sigma h \otimes h \sigma + \delta^2\id\otimes h \sigma\otimes\sigma h\big]\nonumber\\
\\
& +&\big[\gamma\delta\sigma\otimes h\sigma \otimes h+\gamma \delta\sigma\otimes h\otimes h\sigma+\gamma\delta\sigma\otimes h\otimes\sigma h+
\gamma\delta\sigma\otimes\sigma h\otimes h\big] \nonumber\\
 \\
&+&\big[\delta\eta\sigma\otimes\sigma h \sigma\otimes h \sigma+\delta\eta\sigma\otimes h\sigma\otimes\sigma h \sigma+\delta\eta\sigma\otimes\sigma h\sigma\otimes\sigma h + \delta\eta\sigma\otimes \sigma h \otimes\sigma h \sigma\big].\nonumber\\
\end{eqnarray*}
Hence, from the last equality we can rewrite \eqref{eq2} as
follows
\begin{eqnarray} \label{eqder}
h&=&Tr_{x]}A_{(x,(x,1),(x,2))}[\id^{(x)}\otimes h\otimes h]A_{(x,(x,1),(x,2))}^{*}\nonumber\\
&=&\tau_1{\Tr( h)}^2+\tau_2{\Tr(\sigma h)}^2\id^{(x)} +
\tau_3\Tr(h)\Tr(\sigma h)\sigma^{(x)}.
\end{eqnarray}
 Here $\theta=\exp2\beta>0$ and
$$\left\{
   \begin{array}{ll}
     \tau_1:=\gamma^2+2\delta^2+\eta^2=\frac{1}{4}[\theta^J(\theta^2+1)+2\theta],\\
      \\
     \tau_2:=2(\gamma\eta+\delta^2)=\frac{1}{4}[\theta^J(\theta^2+1)-2\theta],\\
      \\
     \tau_3:=4\delta(\gamma+\eta)=\frac{1}{2}\theta^J(\theta^2-1),
   \end{array}
 \right.$$
Now taking into account
$$\Tr(h)=\frac{h_{11}+h_{22}}{2},\   \
\Tr(\sigma h)=\frac{h_{11}-h_{22}}{2}$$ the equation \eqref{eqder}
is reduced to the following one
\begin{equation}\label{EQ1}
\left\{
   \begin{array}{lll}
\Tr(h)=\tau_1\Tr(h)^2+\tau_2\Tr(\s h)^2,\\
\Tr(\s h)=\tau_3\Tr(h)\Tr(\s h),\\
h_{21}=0, h_{12}=0.\\
   \end{array}
 \right.
 \end{equation}

The obtained equation implies that a solution $h$ is diagonal, and
$\omega_{0}$ could be also chosen diagonal, through the equation.
In what follows, we always assume that $h_{21}=0, h_{12}=0$. In
the next sections we are going to examine \eqref{EQ1}.

%%%%%%%%%%%%%%%%%%%%%%%%%%%%%%%%%%%%%%%%%%%%%%%%%%%%%%%%%%%%%%%%%%%%%%%%%%%%%%%%%%%%%%%%%%%%%%%%%%%%%%%%%%%%%%%%%%%%%%%%%%%%%%%%%%%%%%%
\section{Existence of QMC associated with the model.}
%%%%%%%%%%%%%%%%%%%%%%%%%%%%%%%%%%%%%%%%%%%%%%%%%%%%%%%%%%%%%%%%%%%%%%%%%%%%%%%%%%%%%%%%%%%%%%%%%%%%%%%%%%%%%%%%%%%%%%%%%%%%%%%%%%%%%%%
In this section we are going to solve \eqref{EQ1}, which yields
the existence of QMC associated with the model.

%%£££££££££££££££££££££££££££££££££££££££££££££££££££££££££££££
\subsection{Case $h_{11}=h_{22}$ and associate QMC}
%%£££££££££££££££££££££££££££££££££££££££££££££££££££££££££££££

Assume that $h_{1,1}=h_{2,2}$ , then \eqref{EQ1} is reduced to
\begin{equation*}
    h_{11}=h_{22}=\frac{1}{\tau_1}.
\end{equation*}

Then putting $\alpha=\frac{1}{\tau_1}$ we get
\begin{equation}
h_{\alpha}=
 \left(
  \begin{array}{cc}
    \alpha & 0 \\
    0 & \alpha \\
  \end{array}
\right)
\end{equation}

\begin{proposition}\label{5.1}
The pair $(\omega_{0},\{h^{x}=h_{\alpha}| x\in L\})$ with
$\omega_{0}=\frac{1}{\alpha}\id,\   \  h^{x}=h_{\alpha}, \forall
x\in L,$ is solution of \eqref{eq1},\eqref{eq2}. Moreover the
associated QMC  can be written on the local algebra
$\mathcal{B}_{L, loc}$ by:
\begin{equation}
\varphi_{\alpha}(a)=\alpha^{2^{n}-1}\Tr\bigg(a\prod_{i=0}^{n-1}K_{[i,i+1]}K_{[i,i+1]}^{*}\bigg),
\ \ \forall a\in B_{\Lambda_{n}}.
\end{equation}
\end{proposition}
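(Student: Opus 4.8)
The plan is to verify the two compatibility conditions \eqref{eq1} and \eqref{eq2} for the proposed pair $(\omega_0,\{h^x=h_\alpha\})$, and then to identify the resulting QMC via Corollary~\ref{state^nwithW_n}. Condition \eqref{eq1} is immediate: with $\omega_0=\frac{1}{\alpha}\id$ and $h^{(0)}=h_\alpha=\alpha\id$, we get $\Tr(\omega_0 h^{(0)})=\Tr(\id)=1$ since $\Tr$ is the normalized trace on $M_2(\bc)$. For condition \eqref{eq2}, I would start from the already-derived identity \eqref{eqder}, which expresses $\Tr_{x]}\big(A_{(x,(x,1),(x,2))}[\id^{(x)}\otimes h\otimes h]A^*_{(x,(x,1),(x,2))}\big)$ in terms of $\Tr(h)$ and $\Tr(\sigma h)$. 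Substituting $h=h_\alpha=\alpha\id$ gives $\Tr(h)=\alpha$ and $\Tr(\sigma h)=0$, so the right-hand side collapses to $\tau_1\alpha^2\,\id^{(x)}$. Since $\alpha=1/\tau_1$, this equals $\alpha\,\id^{(x)}=h_\alpha$, which is exactly \eqref{eq2}. (Equivalently, this is just the $h_{11}=h_{22}$ reduction of \eqref{EQ1} already recorded: $h_{11}=h_{22}=1/\tau_1$.) Thus the hypotheses of Theorem~\ref{compa} are satisfied and a unique forward QMC $\varphi_\alpha:=\ffi_{\omega_0,\bh}$ exists.

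It remains to derive the closed-form expression for $\varphi_\alpha$ on $\cb_{L,\mathrm{loc}}$. By Corollary~\ref{state^nwithW_n}, for $a\in\cb_{\Lambda_n}$ we have $\varphi_\alpha(a)=\Tr(\cw_{n]}a)$ with $\cw_{n]}={\mathbf K}_n{\mathbf K}_n^*$ and ${\mathbf K}_n=\omega_0^{1/2}\big(\prod_{m=1}^{n-1}K_{[m,m+1]}\big)\bh_n^{1/2}$ (here $K_{[0,1]}$ is absorbed into the product indexing as in \eqref{K3}; I would keep whichever indexing convention \eqref{K3} uses and simply write $\prod_{i=0}^{n-1}K_{[i,i+1]}$). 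Now $\omega_0^{1/2}=\alpha^{-1/2}\id$ is a scalar multiple of the identity localized at the root, and $\bh_n^{1/2}=\prod_{x\in\overrightarrow W_n}(h^x)^{1/2}=\alpha^{1/2}\id$ on $W_n$, with $|W_n|=2^n$ so that the scalar contribution of $\bh_n^{1/2}(\bh_n^{1/2})^*$ is $\alpha^{|W_n|}=\alpha^{2^n}$ and the one from $\omega_0$ is $\alpha^{-1}$. Pulling these scalars out of the trace, $\cw_{n]}=\alpha^{2^n-1}\prod_{i=0}^{n-1}K_{[i,i+1]}K_{[i,i+1]}^*$ (after reindexing), and hence $\varphi_\alpha(a)=\alpha^{2^n-1}\Tr\big(a\prod_{i=0}^{n-1}K_{[i,i+1]}K_{[i,i+1]}^*\big)$, using cyclicity of the trace to move $a$ to the front.

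I do not anticipate a serious obstacle here; the proposition is essentially a specialization of the general machinery already set up. The one point requiring minor care is bookkeeping: making sure the index range of the product ($m=1,\dots,n-1$ versus $i=0,\dots,n-1$) and the placement of $\omega_0^{1/2}$ and $K_{[0,1]}$ are handled consistently with \eqref{K3}, and that the scalar $\alpha$ is correctly attributed to the $2^n$ sites of $W_n$ together with the single root factor from $\omega_0$. A secondary point is that one should note that $\alpha>0$ (indeed $\tau_1>0$ for $J>0,\beta>0$, since $\tau_1=\frac14[\theta^J(\theta^2+1)+2\theta]$ with $\theta>1$), so that $h_\alpha$ and $\omega_0$ are genuinely positive and the construction of Theorem~\ref{compa} applies; this also guarantees $\cw_{n]}$ is a bona fide density operator, consistent with $\Tr(\cw_{n]})=1$ which follows from \eqref{eq1} as in the proof of Theorem~\ref{compa}.
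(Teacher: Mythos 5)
Your proposal is correct and follows essentially the same route as the paper: the verification of \eqref{eq1}--\eqref{eq2} is exactly the $h_{11}=h_{22}$ reduction of \eqref{EQ1} recorded just before the proposition, and the closed form for $\varphi_\alpha$ is obtained, as in the paper's proof, by noting that $\omega_0$ and $\bh_n$ are scalar multiples of the identity contributing the factor $\alpha^{|W_n|-1}=\alpha^{2^n-1}$. Your extra remarks on positivity of $\alpha$ and on the index bookkeeping are sound but not needed beyond what the paper already does.
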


\begin{proof}
 Let $n\in\natural, a\in B_{\Lambda_{n}}$
then using Lemma \ref{LL} one finds
\begin{eqnarray*}
\varphi_{\alpha}(a) &=& \Tr\bigg(\omega_{0}h_{n}\bigg(\prod_{m=0}^{n-1}K_{[m,m+1]}K_{[m,m+1]}^{*}\bigg)a\bigg) \nonumber\\
&=&\alpha^{|W_{n}|-1}\Tr\bigg(a \prod_{m=0}^{n-1}K_{[m,m+1]}K_{[m,m+1]}^{*}\bigg) \nonumber\\
&=&\alpha^{2^{n}-1}\Tr\bigg(a
\prod_{i=0}^{n-1}K_{[i,i+1]}K_{[i,i+1]}^{*} \bigg).
\end{eqnarray*}
This completes the proof.
\end{proof}

%%%£££££££££££££££££££££££££££££££££££££££££££££££££££££££££££££££££££
\subsection{Case  $h_{11}\neq h_{22}$ and associate QMC}
%%%£££££££££££££££££££££££££££££££££££££££££££££££££££££££££££££££££££

Assume that $h_{11}\neq h_{22}$, then \eqref{EQ1} is reduced to
\begin{eqnarray}\label{EQ2}
\left\{
  \begin{array}{ll}\
    h_{11}+h_{22}=\frac{1}{\tau_2},  \\
        (h_{11}-h_{22})^{2}=\frac{\tau_{3}-\tau_1}{\tau_2.\tau_{3}^{2}},
  \end{array} \right.
\end{eqnarray}
Let $$\Delta(\theta):=4(\tau_3-\tau_1)=\theta^{J}(\theta^{2}-3)-2\theta$$

One can see that the last system has a solution iff
$\tau_3>\tau_1$, i.e. whenever $\Delta(\theta)>0$.

\begin{proposition}\label{hh0}
Assume that $\tau_3>\tau_1$. Then the equation \eqref{EQ1} has two
solutions given by:
\begin{eqnarray}\label{hh1}
  && h =\xi_{0}\id+\xi_{3}\sigma,\\
&&\label{hh2}
   h'=\xi_{0}\id-\xi_{3}\sigma,
\end{eqnarray}
where
\begin{eqnarray}\label{xi01}
\xi_{0}=\frac{1}{\tau_3}=\frac{2}{\theta^{J}(\theta^{2}-1)}, \ \ \
\xi_{3}=\frac{\sqrt{\tau_3-\tau_1}}{\tau_3\sqrt{\tau_2}}=\frac{2}{\theta^{J}(\theta^{2}-1)}\sqrt\frac{{\Delta(\theta)}}{{\theta^{J}(\theta^{2}+1)-2\theta}}
\end{eqnarray}
\end{proposition}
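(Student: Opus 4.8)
The plan is to substitute the diagonal ansatz $h = \xi_0 \id + \xi_3 \sigma$ into the reduced equation \eqref{EQ1} and solve algebraically for $\xi_0$ and $\xi_3$. First I would record the trace identities that govern the substitution: if $h = \xi_0\id + \xi_3\sigma$, then $\Tr(h) = \xi_0$ and $\Tr(\sigma h) = \xi_3$ (using the normalized trace on $M_2(\bC)$, for which $\Tr(\id) = 1$ and $\Tr(\sigma^2) = 1$, $\Tr(\sigma) = 0$). Under the standing assumption $h_{11}\neq h_{22}$, i.e. $\xi_3\neq 0$, the second equation of \eqref{EQ1}, namely $\Tr(\sigma h) = \tau_3 \Tr(h)\Tr(\sigma h)$, forces $\Tr(h) = 1/\tau_3$, hence $\xi_0 = 1/\tau_3$. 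This is exactly the first line of \eqref{EQ2} once one checks the arithmetic relation $1/\tau_3 = 1/(2\tau_2) \cdot (\text{something})$; more directly, substituting $\xi_0 = 1/\tau_3$ the first equation $\Tr(h) = \tau_1\Tr(h)^2 + \tau_2\Tr(\sigma h)^2$ becomes $1/\tau_3 = \tau_1/\tau_3^2 + \tau_2 \xi_3^2$, which I would solve for $\xi_3^2$, obtaining $\xi_3^2 = (\tau_3 - \tau_1)/(\tau_2 \tau_3^2)$, i.e. the second line of \eqref{EQ2}.

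Next I would note that this value of $\xi_3^2$ is nonnegative precisely when $\tau_3 \geq \tau_1$, and strictly positive (so that $h_{11}\neq h_{22}$ genuinely holds, rather than collapsing to the earlier case) precisely when $\tau_3 > \tau_1$; by the displayed definition $\Delta(\theta) = 4(\tau_3 - \tau_1)$ this is the hypothesis $\Delta(\theta) > 0$. Taking the two square roots $\xi_3 = \pm\sqrt{\tau_3 - \tau_1}/(\tau_3\sqrt{\tau_2})$ yields the two solutions $h$ and $h'$ in \eqref{hh1}, \eqref{hh2}. I would then plug in the explicit formulas for $\tau_1, \tau_2, \tau_3$ from the previous section to rewrite $\xi_0 = 2/(\theta^J(\theta^2-1))$ and to put $\xi_3$ in the stated form involving $\Delta(\theta)$ and $\theta^J(\theta^2+1) - 2\theta$; this is a direct but slightly fiddly simplification, using $\tau_3 = \tfrac12\theta^J(\theta^2-1)$, $\tau_2 = \tfrac14(\theta^J(\theta^2+1) - 2\theta)$, and $4(\tau_3 - \tau_1) = \Delta(\theta)$.

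Finally I would verify that each $h$ (resp. $h'$) is genuinely positive, i.e. that the diagonal entries $\xi_0 \pm \xi_3$ are positive, and record the accompanying $\omega_0$ (chosen diagonal via \eqref{eq1}), so that the pair indeed solves \eqref{eq1}, \eqref{eq2} and hence by Theorem \ref{compa} produces a QMC; positivity of $\omega_0 h^{(0)}$ with trace one is automatic once we pick $\omega_0$ appropriately, since $h$ is invertible when $\xi_0 > |\xi_3|$. The only genuine subtlety, and the step I expect to need the most care, is bookkeeping the normalization convention of the trace consistently throughout so that the constants $\tau_i$ match the formulas in \eqref{EQ1}; the algebra itself is elementary — it is just a scalar quadratic system in $\xi_0, \xi_3$ — so there is no real obstacle beyond careful substitution and the sign/positivity check.
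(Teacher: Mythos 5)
Your proposal is correct and follows essentially the same route as the paper: the paper reduces \eqref{EQ1} in the case $h_{11}\neq h_{22}$ to the scalar system \eqref{EQ2} and reads off $\xi_0=1/\tau_3$ and $\xi_3^2=(\tau_3-\tau_1)/(\tau_2\tau_3^2)$, exactly as you do. Your closing positivity check $\xi_0>|\xi_3|$ (equivalent to $\tau_3<\tau_1+\tau_2$, which indeed always holds for these $\tau_i$) is a worthwhile supplement that the paper leaves implicit.
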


\begin{proof}
Assume that  $\Delta(\theta)>0$. Then one can conclude that
\eqref{EQ2} is equivalent to the following system
\begin{eqnarray*}
\left\{
  \begin{array}{ll}
    h_{1,1}+h_{2,2}=2\xi_0,  \\
      h_{1,1}-h_{2,2}=\pm2\xi_3
  \end{array}
\right.
\end{eqnarray*}
It is easy to see that $h_{1,1}=\xi_{0}-\xi_{3}$,
$h_{2,2}=\xi_{0}+\xi_3$. Hence, we get \eqref{hh1},\eqref{hh2}.
\end{proof}

From \eqref{eq1} we find that $\omega_0=\frac{1}{\xi_0}\id\in
\mathcal{B}^{+}$s. Therefore, the pairs $\big(\omega_{0},\ \
\{h^{(x)}=h, \ x\in L\}\big)$ and $\big(\omega_0,\{h^{(x)}=h', \
x\in L\}\big)$ define two solutions of \eqref{eq1},\eqref{eq2}.
Hence, they define two QMC $\varphi_1$ and $\varphi_2$,
respectively. Namely, for every $ a\in\mathcal{B}_{\Lambda_n}$ one
has
\begin{eqnarray}\label{F1}
&&\varphi_1(a)=\Tr \big(\omega_0 K_{[0,1]}\cdots K_{[n-1,n]} \bh_n
K_{[n-1,n]}^{*}\cdots
K_{[0,1]}^{*}a\big)\\[2mm] \label{F2}
&&\varphi_2(a)=\Tr \big(\omega_0 K_{[0,1]}\cdots K_{[n-1,n]}
\bh'_n K_{[n-1,n]}^{*}\cdots K_{[0,1]}^{*}a\big).
\end{eqnarray}

Hence, we have proved the following

\begin{theorem}\label{5.3}
Let $\theta>1$. Then the following statements hold:
\begin{itemize}
\item[(i)] if $\Delta(\theta)\leq0$, then there is a unique
translation invariant QMC $\varphi_\a$;

\item[(ii)] if $\Delta(\theta)>0$, then there are at least three
translati0n invariant QMC $\varphi_\a$, $\varphi_1$ and
$\varphi_2$.
\end{itemize}
\end{theorem}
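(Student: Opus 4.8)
\textbf{Proof proposal for Theorem \ref{5.3}.}
The plan is to combine the analysis of the fixed-point equation \eqref{EQ1} carried out in Sections 4--5 with the compatibility criterion of Theorem \ref{compa}. The key observation is that, by Theorem \ref{compa}, each translation-invariant solution $(\omega_0,\{h^x=h\})$ of \eqref{eq1}--\eqref{eq2} produces a (unique) forward QMC, and that for the present model equation \eqref{eq2} has been reduced to the scalar system \eqref{EQ1} via \eqref{eqder}. So the whole statement is really a case analysis of \eqref{EQ1} under the hypothesis $\theta=e^{2\beta}>1$.

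First I would treat the branch $h_{11}=h_{22}$. Here \eqref{EQ1} collapses to the single equation $\Tr(h)=\tau_1\Tr(h)^2$ (the second equation is automatic and the third forces the off-diagonal entries to vanish), whose unique nonzero solution is $\Tr(h)=1/\tau_1$, i.e. $h=h_\alpha$ with $\alpha=1/\tau_1$; the normalization $\omega_0=\alpha^{-1}\id$ then satisfies \eqref{eq1}. This is exactly Proposition \ref{5.1}, so this branch always contributes the QMC $\varphi_\alpha$, regardless of the sign of $\Delta(\theta)$. Next I would treat the branch $h_{11}\neq h_{22}$: dividing the second equation of \eqref{EQ1} by $\Tr(\sigma h)\neq 0$ gives $\Tr(h)=1/\tau_3$, and substituting into the first equation gives $(\Tr(\sigma h))^2=(\tau_3-\tau_1)/(\tau_2\tau_3^2)$. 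Since $\theta>1$ implies $\tau_2>0$ and $\tau_3>0$ (both are manifestly positive for $\theta>1$ from their closed forms), this has a real nonzero solution if and only if $\tau_3>\tau_1$, i.e. $\Delta(\theta)=4(\tau_3-\tau_1)>0$; this is Proposition \ref{hh0}, giving the two distinct diagonal solutions $h,h'$ of \eqref{hh1}--\eqref{hh2} and hence the two QMC $\varphi_1,\varphi_2$ of \eqref{F1}--\eqref{F2}. Part (i) then follows because when $\Delta(\theta)\leq 0$ the second branch is empty, so $h_\alpha$ is the only translation-invariant solution and $\varphi_\alpha$ the only translation-invariant QMC; part (ii) follows because when $\Delta(\theta)>0$ all three solutions $h_\alpha,h,h'$ are present and pairwise distinct, yielding the three QMC.

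The only genuinely delicate points are bookkeeping rather than conceptual: one must check that $h_\alpha$, $h$, $h'$ are genuinely distinct when $\Delta(\theta)>0$ (equivalently $\xi_3\neq 0$ and $\alpha\neq\xi_0\pm\xi_3$, which follows from $\tau_1\neq\tau_3$), that each candidate $h$ lies in $\mathcal{B}_{x,+}$ and each $\omega_0$ in $\mathcal{B}_{(0),+}$ so that Theorem \ref{compa} actually applies (positivity of $\alpha$, $\xi_0$, and $\xi_0\pm\xi_3$ for $\theta>1$), and that ``unique'' in (i) is understood within the translation-invariant class, as reflected in the statement. I expect the verification of positivity of $\xi_0\pm\xi_3$ --- i.e. that $|\xi_3|<\xi_0$, which amounts to $\Delta(\theta)<\theta^J(\theta^2+1)-2\theta$, an elementary inequality for $\theta>1$ --- to be the one spot requiring a short explicit estimate; everything else is a direct consequence of Theorem \ref{compa} together with Propositions \ref{5.1} and \ref{hh0}.
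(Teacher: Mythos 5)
Your proposal is correct and follows essentially the same route as the paper: the paper's proof of Theorem \ref{5.3} is precisely the case analysis of \eqref{EQ1} into the branches $h_{11}=h_{22}$ (Proposition \ref{5.1}, always yielding $\varphi_\alpha$) and $h_{11}\neq h_{22}$ (Proposition \ref{hh0}, yielding $h,h'$ exactly when $\tau_3>\tau_1$, i.e. $\Delta(\theta)>0$), combined with Theorem \ref{compa}. Your added remarks on distinctness and on the positivity checks $\tau_2,\tau_3>0$ and $|\xi_3|<\xi_0$ are correct and in fact make explicit details the paper leaves implicit.
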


We point out that the critical line $\Delta(\theta)=0$ is first
observed in \cite{GPW} for the classical Ising model with
competing interactions.

Now let us consider some particular values of $\theta$.
\begin{description}
 \item[(i)] Let $J=1$, then one has $\Delta(\theta)=\theta^{3}-5\theta$.
 So,
  \begin{itemize}
   \item if $1<\theta\leq \sqrt{5}$, there is a unique QMC;
   \item if $\theta> \sqrt{5}$, there exist three QMC.
   \end{itemize}
  \item[(ii)] Let $J=2$, then
  $\Delta(\theta)=\theta(\theta+1)(\theta-\frac{1-\sqrt{5}}{2})(\theta-\frac{1+\sqrt{5}}{2})$.
  Hence,
      \begin{itemize}
   \item if $1<\theta\leq \frac{1+\sqrt{5}}{2}$, there is a unique QMC;
   \item if $\theta> \frac{1+\sqrt{5}}{2}$, there exist three QMC.
   \end{itemize}
\end{description}

%%%%%%%%%%%%%%%%%%%%%%%%%%%%%%%%%%%%%%%%%%%%%%%%%%%%%%%%%%%%%%%%%%%%%%%%%%%%%%%%%%%%%%%%%%%%%%%%%%%%%%%%%%%%
%%%%%%%%%%%%%%%%%%%%%%%%%%%%%%%%%%%%%%%%%%%%%%%%%%%%%%%%%%%%%%%%%%%%%%%%%%%%%%%%%%%%%%%%%%%%%%%%%%%%%%%%%%
\section{Proof of Theorem \ref{Main}}
%%%%%%%%%%%%%%%%%%%%%%%%%%%%%%%%%%%%%%%%%%%%%%%%%%%%%%%%%%%%%%%%%%%%%%%%%%%%%%%%%%%%%%%%%%%%%%%%%%%%%%%%%%%%
%%%%%%%%%%%%%%%%%%%%%%%%%%%%%%%%%%%%%%%%%%%%%%%%%%%%%%%%%%%%%%%%%%%%%%%%%%%%%%%%%%%%%%%%%%%%%%%%%%%%%%%%%%%%%ù

This section is devoted to the proof of Theorem \ref{Main}. To
realize it, we first show not overlapping supports of the states
$\varphi_1$ and $\varphi_2$. Then we will show that they are not
quasi-equivalent.

\subsection{Not overlapping supports of $\varphi_1$ and $\varphi_2$}

As usual we put
$$
e_{11}= \left(
\begin{array}{cc}
                  1 & 0 \\
                  0 & 0
                \end{array}
                \right),  \    \           \
                e_{22}=
                \left( \begin{array}{cc}
                  0 & 0 \\
                  0 & 1
                \end{array}
                \right).
$$

Now for each $n\in\mathbf{N}$ denote
$$
p_n:=\bigg(\bigotimes_{x\in \Lambda_n}e_{11}^{(x)}\bigg)\otimes\id, \             \ q_n:=\bigg(\bigotimes_{x\in \Lambda_n}e_{22}^{(x)}\bigg)\otimes\id.
$$

\begin{lemma}\label{pq_n} For every $n\in\bn$ one has
\begin{enumerate}
\item[(i)] $\varphi_{1}(p_n)=\varphi_2(q_n)=\frac{1}{\xi_0}\left(\xi_0+\xi_3\right)^{2^{n}}\left(\frac{\tau_1+\tau_2+\tau_3}{4}\right)^{2^{n}-1},$
\item[(ii)] $\varphi_1(q_n)=\varphi_2(p_n)=\frac{1}{\xi_0}\left(\xi_0-\xi_3\right)^{2^{n}}\left(\frac{\tau_1+\tau_2+\tau_3}{4}\right)^{2^{n}-1}.$
\end{enumerate}
\end{lemma}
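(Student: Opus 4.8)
The plan is to compute $\varphi_1(p_n)$ directly from the formula \eqref{F1} and then read off the other three identities by symmetry. First I would observe that, since $h = \xi_0\id + \xi_3\sigma$ is diagonal with entries $h_{11} = \xi_0 + \xi_3$, $h_{22} = \xi_0 - \xi_3$ (note the sign convention from Proposition \ref{hh0}), and since $e_{11}$, $e_{22}$ are the spectral projections of $\sigma$, the operator $p_n$ commutes with everything diagonal; so evaluating $\varphi_1(p_n) = \Tr(\omega_0 K_{[0,1]}\cdots K_{[n-1,n]}\bh_n K_{[n-1,n]}^*\cdots K_{[0,1]}^* p_n)$ amounts to projecting every tensor factor over $\Lambda_n$ onto the ``$+$'' state. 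The key structural input is Corollary \ref{state^nwithW_n} together with the projectivity \eqref{pro}: it lets me replace the $\Lambda_{n+1}$-expression by $\tr(\cw_{n]}(p_n))$ after a partial trace over $W_{n+1}$, and the recursive structure of $\cw_{n]}$ via \eqref{K3}, \eqref{K11} then drives an induction on $n$.

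The main computation is the single-branching step. Fix a vertex $x$ with successors $(x,1),(x,2)$ and consider $A_{(x,(x,1),(x,2))}$ from \eqref{Ax} acting on $e_{11}^{(x,1)}\otimes e_{11}^{(x,2)}$ sandwiched appropriately; using $\sigma e_{11}\sigma = e_{11}$ and the orthogonality $e_{11}e_{22}=0$, one finds that conditioning the two children to the ``$+$'' state forces (through the $\sigma^{(x)}$-terms in \eqref{Ax}) the parent $x$ to carry a specific diagonal weight, and the scalar produced is exactly $\tfrac{1}{4}(\tau_1+\tau_2+\tau_3)$ times the ``$+$''-weight on $x$. Indeed $\tau_1+\tau_2+\tau_3 = \tfrac14[\theta^J(\theta^2+1)+2\theta] + \tfrac14[\theta^J(\theta^2+1)-2\theta] + \tfrac12\theta^J(\theta^2-1) = \theta^J(\theta^2+1)$ — this is the combination that appears; one checks it equals $4(\gamma+\eta+\delta)(\gamma+\eta-\delta)/\dots$ or more directly computes $\Tr_{x]}$ of the relevant block. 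So each of the $|W_n| = 2^n$ vertices that is \emph{not} the root contributes one factor $\tfrac{1}{4}(\tau_1+\tau_2+\tau_3)$ and one factor $h_{11} = \xi_0+\xi_3$ from $\bh_n$, while the root contributes $\omega_0$; since $\omega_0 = \tfrac{1}{\xi_0}\id$, the overall normalization is $\tfrac{1}{\xi_0}$. Counting: there are $2^n$ leaves supplying $(\xi_0+\xi_3)$, and the number of internal branching vertices traversed is $1 + 2 + \cdots + 2^{n-1} = 2^n - 1$, supplying the power $\big(\tfrac{\tau_1+\tau_2+\tau_3}{4}\big)^{2^n-1}$. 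Assembling gives (i).

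For (ii), the identical computation applies with $e_{11}$ replaced by $e_{22}$: now the children are conditioned to the ``$-$'' state, each of the $2^n$ leaves supplies $h_{22} = \xi_0 - \xi_3$ instead, and the branching scalar $\tfrac14(\tau_1+\tau_2+\tau_3)$ is unchanged because it is built from even functions of $\sigma$. This yields $\varphi_1(q_n) = \tfrac{1}{\xi_0}(\xi_0-\xi_3)^{2^n}\big(\tfrac{\tau_1+\tau_2+\tau_3}{4}\big)^{2^n-1}$. Finally, $\varphi_2$ is obtained from $\varphi_1$ by the substitution $\xi_3 \mapsto -\xi_3$ (compare \eqref{hh1}--\eqref{hh2} and \eqref{F1}--\eqref{F2}), which swaps the roles of $p_n$ and $q_n$; hence $\varphi_2(q_n) = \varphi_1(p_n)$ and $\varphi_2(p_n) = \varphi_1(q_n)$, completing both identities.

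\textbf{Main obstacle.} The delicate point is the single-branching identity, i.e. verifying that $\Tr_{x]}\big(A_{(x,(x,1),(x,2))}\,(\id^{(x)}\otimes e_{11}\otimes e_{11})\cdots A_{(x,(x,1),(x,2))}^*\big)$ collapses to $\tfrac14(\tau_1+\tau_2+\tau_3)\,e_{11}^{(x)}$ (and likewise with $e_{22}$), keeping careful track of which cross-terms in the expansion of \eqref{Ax} survive the projections and the partial trace over $x$. Everything else is bookkeeping: an induction on $n$ using projectivity to peel off $W_{n+1}$, and the geometric count $1+2+\cdots+2^{n-1}=2^n-1$ of internal vertices.
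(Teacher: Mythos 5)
Your plan follows essentially the same route as the paper's proof: both exploit the eigenvalue relations $he_{11}=(\xi_0+\xi_3)e_{11}$, $he_{22}=(\xi_0-\xi_3)e_{22}$ (and the mirrored ones for $h'$) to reduce everything to the single-branching partial trace, which collapses to the scalar $(\gamma+2\delta+\eta)^2/4=(\tau_1+\tau_2+\tau_3)/4$ per internal vertex, and then telescope over levels with $2^n$ leaves and $2^n-1$ internal vertices exactly as in the paper. The only blemish is the parenthetical evaluation $\tau_1+\tau_2+\tau_3=\theta^{J}(\theta^{2}+1)$, which should be $\theta^{J+2}$; this slip plays no role in the argument.
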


\begin{proof} (i). From \eqref{F1} we find
\begin{eqnarray*}
\varphi_1(p_n)=\Tr\left[ \omega_0K_{[0,1]}\cdots
K_{[n-2,n-1]}\cdot\Tr_{n-1]}
\big((K_{[n-1,n]}\bh_np_nK_{[n-1,n]}^{*}\big)\cdot
K_{[n-2,n-1]}^{*}\cdots
K_{[0,1]}^{*}\right]\nonumber\\
\end{eqnarray*}

Now using the fact that: $he_{11}=(\xi_0+\xi_3)e_{11}$ and
\eqref{Ax} one gets
\begin{eqnarray*} \Tr_{n-1]}
K_{[n-1,n]}\bh_np_nK_{[n-1,n]}^{*}
&=& p_{n-2}\otimes\prod_{x\in  \overrightarrow W_{n-1}}A_{(x,(x,1),(x,2))}e_{11}^{(x)}\otimes h e_{11}^{(x,1)}\otimes h e_{11}^{(x,2)}A_{(x,(x,1),(x,2))}\nonumber\\
&=& (\xi_0+\xi_3)^{2|W_{n-1]}|}\bigg(\frac{\tau_1+\tau_2+\tau_3}{4}\bigg)^{|W_{n-1}|}p_{n-1}.\nonumber\\
\end{eqnarray*}
Hence,
\begin{eqnarray*}
\varphi_1(p_n)&=&(\xi_0+\xi_3)^{|W_{n}|}\left(\frac{\tau_1+\tau_2+\tau_3}{4}\right)^{|W_{n-1}|}\Tr\left[ \omega_0K_{[0,1]}\cdots K_{[n-2,n-1]}p_{n-1}K_{[n-2,n-1]}^{*}\cdots K_{[0,1]}^{*}\right]\nonumber\\
&&.\nonumber\\
&&.\nonumber\\
&&.\nonumber\\
&=& \left(\xi_0+\xi_3\right)^{|W_{n}|}\left(\frac{\tau_1+\tau_2+\tau_3}{4}\right)^{|W_{n-1}|+ ... +|W_{0}|}.\Tr\left[ \omega_0p_{0}\right]\nonumber\\
&=&\left(\xi_0+\xi_3\right)^{|W_{n}|}\left(\frac{\tau_1+\tau_2+\tau_3}{4}\right)^{|\Lambda_{n-1}|}\Tr\left[ \omega_0p_{0}\right]\nonumber\\
&=&\frac{1}{\xi_0}\left(\xi_0+\xi_3\right)^{2^{n}}\left(\frac{\tau_1+\tau_2+\tau_3}{4}\right)^{2^n-1}\left(\frac{1}{\xi_0}+\frac{1}{\xi_3}\right).
\end{eqnarray*}

Analogously, using the fact that: $h'e_{22}=(\xi_0+\xi_3)e_{22}$
we obtain

\begin{eqnarray*}
\Tr_{n-1]} K_{[n-1,n]}\bh'_nq_nK_{[n-1,n]}^{*}
&=& q_{n-2}\otimes\prod_{x\in \mathcal{W}_{n-1]}}A_{(x,(x,1),(x,2))}e_{2,2}^{(x)}\otimes h' e_{2,2}^{(x,1)}\otimes h' e_{22}^{(x,2)}A_{(x,(x,1),(x,2))}\nonumber\\
&=&
(\xi_0+\xi_3)^{2|\mathcal{W}_{n-1]}|}\bigg(\frac{\tau_1+\tau_2+\tau_3}{4}\bigg)^{|\mathcal{W}_{n-1]}|}q_{n-1}.
\end{eqnarray*}
which yields
$$\varphi_2(q_n)=\frac{1}{\xi_0}\left(\xi_0+\xi_3\right)^{2^{n}}\left(\frac{\tau_1+\tau_2+\tau_3}{4}\right)^{2^n-1}\left(\frac{1}{\xi_0}+\frac{1}{\xi_3}\right).$$

(ii) Now from $he_{22}=(\xi_0-\xi_3)e_{22}$ and
$h'e_{11}=(\xi_0-\xi_3)e_{11}$. One can find.
\begin{eqnarray*}
&&\Tr_{n-1]} K_{[n-1,n]}\bh_nq_nK_{[n-1,n]}^{*} =
(\xi_0-\xi_3)^{2|W_{n-1}|}\bigg(\frac{\tau_1+\tau_2+\tau_3}{4}\bigg)^{|W_{n-1}|}q_{n-1}.\\[2mm]
&&\Tr_{n-1]} K_{[n-1,n]}\bh'_np_nK_{[n-1,n]}^{*} =
(\xi_0-\xi_3)^{2|W_{n-1}|}\bigg(\frac{\tau_1+\tau_2+\tau_3}{4}\bigg)^{|W_{n-1}|}p_{n-1}.
\end{eqnarray*}
The same argument as above implies (ii). This completes the proof.
\end{proof}

\begin{theorem}\label{6.2}
For $n\in\mathbf{N}$ fixed, one has
$$\varphi_1(p_n) \rightarrow 1, \      \   \varphi_2(q_n)\rightarrow 1 \ \ \textrm{as} \ \ \beta\rightarrow +\infty.
$$
\end{theorem}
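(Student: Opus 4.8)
The plan is to exploit the explicit formula from Lemma \ref{pq_n}(i), namely
$$
\varphi_1(p_n)=\varphi_2(q_n)=\frac{1}{\xi_0}\left(\xi_0+\xi_3\right)^{2^{n}}\left(\frac{\tau_1+\tau_2+\tau_3}{4}\right)^{2^{n}-1},
$$
and simply track the $\beta\to+\infty$ asymptotics of each factor with $\theta=e^{2\beta}\to+\infty$. Since $n$ is fixed, it suffices to show that $\xi_0+\xi_3\to 1$ and $\frac{\tau_1+\tau_2+\tau_3}{4}\to 1$; then each of the finitely many factors tends to $1$ and so does the product. So the proof reduces to three elementary limit computations.

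First I would compute $\frac{\tau_1+\tau_2+\tau_3}{4}$. Using the formulas for $\tau_1,\tau_2,\tau_3$ in terms of $\theta$, one gets
$$
\tau_1+\tau_2+\tau_3=\frac{1}{4}\big[\theta^J(\theta^2+1)+2\theta\big]+\frac{1}{4}\big[\theta^J(\theta^2+1)-2\theta\big]+\frac{1}{2}\theta^J(\theta^2-1)=\theta^{J+2},
$$
so in fact $\frac{\tau_1+\tau_2+\tau_3}{4}=\dfrac{\theta^{J+2}}{4}$, which does \emph{not} tend to $1$. This is the main subtlety: the naive term-by-term limit fails, and one must instead combine the divergent factor $\left(\tfrac{\theta^{J+2}}{4}\right)^{2^n-1}$ with the decay of $\xi_0+\xi_3$. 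So the correct approach is to compute $\xi_0+\xi_3$ precisely rather than just its limit. From \eqref{xi01}, $\xi_0+\xi_3=\dfrac{2}{\theta^J(\theta^2-1)}\Big(1+\sqrt{\tfrac{\Delta(\theta)}{\theta^J(\theta^2+1)-2\theta}}\Big)$, and since $\Delta(\theta)=\theta^J(\theta^2-3)-2\theta$ while $\theta^J(\theta^2+1)-2\theta$ differ only by lower-order terms, the square root tends to $1$, giving $\xi_0+\xi_3=\dfrac{4}{\theta^J(\theta^2-1)}(1+o(1))=\dfrac{4}{\theta^{J+2}}(1+o(1))$ as $\theta\to+\infty$.

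Putting the pieces together, $\left(\xi_0+\xi_3\right)^{2^n}\left(\tfrac{\theta^{J+2}}{4}\right)^{2^n-1}=\left(\tfrac{4}{\theta^{J+2}}\right)^{2^n}(1+o(1))\cdot\left(\tfrac{\theta^{J+2}}{4}\right)^{2^n-1}=\tfrac{4}{\theta^{J+2}}(1+o(1))\to 0$; dividing by $\xi_0=\tfrac{2}{\theta^J(\theta^2-1)}\to 0$, we have $\tfrac{1}{\xi_0}=\tfrac{\theta^{J+2}}{2}(1+o(1))$, so $\varphi_1(p_n)=\tfrac{1}{\xi_0}\left(\xi_0+\xi_3\right)^{2^n}\left(\tfrac{\theta^{J+2}}{4}\right)^{2^n-1}=\tfrac{\theta^{J+2}}{2}\cdot\tfrac{4}{\theta^{J+2}}(1+o(1))=2(1+o(1))$. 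This gives the limit $2$, not $1$ — so either I have misidentified a factor or there is a normalization constant (a factor $\tfrac12$) hidden in the precise statement of Lemma \ref{pq_n}; in the write-up I would recheck the constant $\big(\tfrac1{\xi_0}+\tfrac1{\xi_3}\big)$ appearing at the last line of that lemma's proof, since $\tfrac{1}{\xi_3}\to 0$ faster and may contribute the missing factor. The hard part is precisely this bookkeeping of competing divergent and vanishing factors: once the identity $\tau_1+\tau_2+\tau_3=\theta^{J+2}$ and the asymptotics $\xi_0+\xi_3\sim 4\theta^{-(J+2)}$, $\xi_0\sim 2\theta^{-(J+2)}\cdot\theta^{\,2}/(\theta^2-1)$ are in hand, the conclusion $\varphi_1(p_n),\varphi_2(q_n)\to 1$ follows by a direct substitution, uniformly over the fixed $n$.
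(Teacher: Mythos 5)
Your approach is the same as the paper's: apply Lemma \ref{pq_n} and track the $\theta=e^{2\beta}\to+\infty$ asymptotics of the three factors. All of your asymptotic computations are correct --- $\tau_1+\tau_2+\tau_3=\theta^{J+2}$, $\xi_0+\xi_3\sim 4\theta^{-(J+2)}$, $1/\xi_0\sim\theta^{J+2}/2$ --- and the factor-of-$2$ discrepancy you ran into is genuine: it is an error in the \emph{statement} of Lemma \ref{pq_n}, not in your bookkeeping. The paper's own proof of Theorem \ref{6.2} glosses over it, writing $1/\xi_0\sim\theta^{J+2}/2$ in one line and then substituting $\theta^{J+2}/4$ in the concluding display to land on the value $1$.

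The missing $\tfrac12$ sits exactly where you suspected, in the terminal constant of the lemma's proof, though not for the reason you give: since $\xi_3\to 0$ we have $1/\xi_3\to+\infty$, not $0$, and the factor $\bigl(\tfrac{1}{\xi_0}+\tfrac{1}{\xi_3}\bigr)$ in the last display of that proof is simply spurious (it does not even appear in the lemma's statement). The correct terminal constant is $\Tr(\omega_0 p_0)=\tfrac{1}{\xi_0}\Tr(e_{11})=\tfrac{1}{2\xi_0}$, because $\Tr$ is the \emph{normalized} trace. You can confirm this independently at $n=1$: using $he_{11}=(\xi_0+\xi_3)e_{11}$, $A\,e_{11}^{\otimes 3}=(\gamma+2\delta+\eta)\,e_{11}^{\otimes 3}$ and the identity $(\gamma+2\delta+\eta)^2=\tau_1+\tau_2+\tau_3$, a direct computation gives $\varphi_1(p_1)=\tfrac{1}{2\xi_0}(\xi_0+\xi_3)^2\,\tfrac{\tau_1+\tau_2+\tau_3}{4}$, i.e.\ half of the lemma's stated value. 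With the corrected formula $\varphi_1(p_n)=\varphi_2(q_n)=\tfrac{1}{2\xi_0}(\xi_0+\xi_3)^{2^n}\bigl(\tfrac{\theta^{J+2}}{4}\bigr)^{2^n-1}$, your substitution yields $\tfrac{\theta^{J+2}}{4}\cdot\tfrac{4}{\theta^{J+2}}(1+o(1))\to 1$, so the theorem stands and your argument, with this single constant repaired, is complete.
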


\begin{proof} We know that $\theta=\exp (2\beta)\rightarrow+\infty $ as
$\beta\rightarrow+\infty$. Hence, one finds
\begin{eqnarray*}
&& \frac{1}{\xi_0}=\frac{\theta^J(\theta^2-1)}{2}\sim
\frac{\theta^{J+2}}{2}, \ \ \textrm{as} \ \
\theta\to+\infty\\[2mm]
&&
\left(\xi_0+\xi_3\right)^{2^{n}}=\left(\frac{2}{\theta^J(\theta^2-1)}(1+\sqrt{\frac{\theta^J(\theta^2-3)-2\theta}{\theta^J(\theta^2+1)-2\theta}})\right)^{2^{n}}
\sim \left(\frac{4}{\theta^{J+2}}\right)^{2^{n}}, \ \ \textrm{as}
\ \
\theta\to+\infty\\[2mm]
&&\left(\frac{\tau_1+\tau_2+\tau_3}{4}\right)^{2^{n}-1}=\left(\frac{\theta^{J+2}}{4}\right)^{2^{n}-1}.
\end{eqnarray*}

Hence, we obtain
$$\varphi_{1}(p_n)=\varphi_2(q_n)
\sim
\frac{\theta^{J+2}}{4}\left(\frac{4}{\theta^{J+2}}\right)^{2^{n}}\left(\frac{\theta^{J+2}}{4}\right)^{2^{n}-1}=1.
$$
So, $$ \lim_{\theta\to\infty}\varphi_{1}(p_n)=
\lim_{\theta\to\infty} \varphi_2(q_n)=1.$$ This completes the
proof.
\end{proof}

\begin{remark} We note that from  $p_n\leq 1-q_n$ one gets
$$
\lim_{\theta\to\infty}\varphi_{1}(q_n)= \lim_{\theta\to\infty}
\varphi_2(p_n)=0.$$

This implies that the states $\varphi_1$ and $\varphi_2$ have not
overlapping supports.
\end{remark}

\subsection{Non quasi equivalence of $\varphi_1$ and
$\varphi_2$}

In this subsection we are going to proof that the state
$\varphi_1$ and $\varphi_2$ are not quasi equivalent.

First note that from the construction of the states $\varphi_1$ and
$\varphi_2$ one can see that they are translation invariant, i.e.
$\varphi_i\tau_g=\varphi_i$ ($i=1,2$) (see \eqref{trans1}) for all
$g\in \G^2_+$. Moreover, using the same argument as in \cite{[Sp75]}
one can show that states $\varphi_1$ and $\varphi_2$ satisfy mixing
property, i.e.
$$
\lim_{|g|\to\infty}\varphi_i(\tau_g(x)y)=\varphi_i(x)\varphi_i(y),
\ \ i=1,2.
$$
This means that they are factor states.

 To establish the
non-quasi equivalence, we are going to use the following result
(see \cite[Corollary 2.6.11]{BR}).

\begin{theorem}\label{br-q}
Let $\varphi_1,$ $\varphi_2$ be two factor states on a quasi-local
algebra $\ga=\cup_{\Lambda}\ga_\Lambda$. The states $\varphi_1,$
$\varphi_2$ are  quasi-equivalent if and only if for any given
$\varepsilon>0$ there exists a finite volume $\Lambda\subset L$
such that $\|\varphi_1(a)-\varphi_2(a)\|<\varepsilon \|a\|$ for
all $a\in B_{\Lambda^{'}}$ with
$\Lambda^{'}\cap\Lambda=\emptyset.$
\end{theorem}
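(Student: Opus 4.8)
As this is a standard structural fact about factor states on quasi-local algebras, in the paper I would simply invoke \cite[Corollary 2.6.11]{BR}; here is the strategy behind it.

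\textbf{Necessity.} Suppose $\varphi_1$ and $\varphi_2$ are quasi-equivalent. Then $\varphi_1$ is $\pi_{\varphi_2}$-normal, hence of the form $\varphi_1(\cdot)=\Tr\big(T\,\pi_{\varphi_2}(\cdot)\big)$ for a density operator $T$ on $\ch_{\varphi_2}$. I would first approximate $T$ in trace norm by a finite-rank density operator $T_N=\sum_{i=1}^{N}\lambda_i|\xi_i\rangle\langle\xi_i|$, so that $|\varphi_1(a)-\Tr(T_N\pi_{\varphi_2}(a))|<\tfrac{\varepsilon}{2}\|a\|$ for all $a\in\ga$, and then approximate each $\xi_i$ by $\pi_{\varphi_2}(b_i)\Omega_{\varphi_2}$ with the $b_i$ localized in one common finite region $\Lambda$. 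Since on the Cayley tree the local algebras are full matrix algebras, $\ch_{\varphi_2}$ splits compatibly as $\ch_\Lambda\otimes\ch_\Lambda'$ with $\pi_{\varphi_2}(\ga_\Lambda)\subseteq B(\ch_\Lambda)\otimes\id$ and $\pi_{\varphi_2}(\ga_{\Lambda'})\subseteq\id\otimes B(\ch_\Lambda')$ whenever $\Lambda'\cap\Lambda=\emptyset$; a short computation then shows that, after enlarging $\Lambda$ to absorb the finitely many ``active'' sites, $\Tr(T_N\pi_{\varphi_2}(a))=\varphi_2(a)$ \emph{exactly} for $a\in\ga_{\Lambda'}$ with $\Lambda'\cap\Lambda=\emptyset$, whence $|\varphi_1(a)-\varphi_2(a)|<\varepsilon\|a\|$ for all such $a$.

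\textbf{Sufficiency.} This is the substantive direction. Two factor states are, by the standard dichotomy, either quasi-equivalent or disjoint, so I would argue by contradiction and suppose $\varphi_1,\varphi_2$ disjoint. Passing to the GNS representation of $\psi=\tfrac12(\varphi_1+\varphi_2)$, disjointness produces a central projection $z\in\pi_\psi(\ga)''$ separating the supports of $\varphi_1$ and $\varphi_2$. Because the quasi-local structure on the tree is asymptotically abelian (spatially disjoint local algebras commute, and any finite region can be pushed off to infinity), the centre of $\pi_\psi(\ga)''$ is carried by the algebra at infinity $\bigcap_n\pi_\psi(\ga_{\Lambda_n^c})''$; hence $z$ is a strong-$*$ limit of $\pi_\psi(a_k)$ with $a_k\in\ga_{\Lambda'}$ of norm $\le1$ and $\Lambda'$ escaping every finite region. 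Evaluating, $\varphi_1(a_k)\to1$ while $\varphi_2(a_k)\to0$, contradicting $|\varphi_1(a)-\varphi_2(a)|<\varepsilon\|a\|$ once $a$ is supported outside the $\Lambda$ attached to some $\varepsilon<2$. Therefore $\varphi_1,\varphi_2$ must be quasi-equivalent.

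The delicate point is the sufficiency direction, and within it the two structural inputs — that distinct factor states are disjoint, and that the centre of the enveloping von Neumann algebra of a locally normal state sits in the algebra at infinity — both of which rest on the asymptotic abelianness of the model (commutation of far-separated observables on the tree). The necessity direction is comparatively soft, requiring only the GNS construction, trace-norm density of finite-rank states, and the explicit tensor-product structure coming from the matrix-algebra local observables.
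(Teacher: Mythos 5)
The paper offers no proof of this statement at all: it is imported verbatim as \cite[Corollary 2.6.11]{BR}, so your decision to ``simply invoke'' that citation is exactly what the paper does, and at that level there is nothing to compare. Your sufficiency sketch is also in the right spirit of the standard argument (factor states are quasi-equivalent or disjoint; a separating central projection would have to be asymptotically visible in the tail algebras, contradicting the hypothesis for $\varepsilon<1$).

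Your necessity sketch, however, contains a concrete error. With $T_N=\sum_i\lambda_i|\xi_i\rangle\langle\xi_i|$ and $\xi_i=\pi_{\varphi_2}(b_i)\Omega_{\varphi_2}$, one gets for $a$ localized disjointly from the $b_i$ that $\Tr(T_N\pi_{\varphi_2}(a))=\sum_i\lambda_i\,\varphi_2(b_i^*b_i\,a)$, and this is \emph{not} equal to $\varphi_2(a)$ unless $\varphi_2$ factorizes across the split, which it does not in general; so the claimed ``exact'' identity fails, and you are left comparing $\varphi_1$ on the tail to some other $\pi_{\varphi_2}$-normal state rather than to $\varphi_2$ itself. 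The telltale sign is that your necessity argument never uses that $\varphi_2$ is a factor state, whereas factoriality is essential there: the correct route is to note that the algebra at infinity $\bigcap_{\Lambda}\pi_{\varphi_2}(\cb_{\Lambda^c})''$ lies in the center of $\pi_{\varphi_2}(\cb_L)''$ (by commutation of spatially disjoint local algebras) and is therefore trivial for a factor state, and then to apply the lemma that for a decreasing net of von Neumann algebras with intersection $\bc\id$ the norm of the difference of any two normal states restricted to the net tends to zero. That closes the gap; as it stands your necessity step would not survive for a non-product $\varphi_2$.
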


Let us define an element of $\mathcal{B}_{\Lambda_n}$ as follows:
$$
E_{\Lambda_n}:=e_{11}^{x_{W_n}^{(1)}}\otimes\bigg(\bigotimes_{y\in
\Lambda_n\setminus \{x_{W_n}^{(1)}\}}\id^{y}\bigg),
$$
where $x_{W_n}^{(1)}$ is defined in \eqref{xw}.

Now we are going to calculate $\varphi_1(E_{\Lambda_n})$ and
$\varphi_2(E_{\Lambda_n})$, respectively.

First consider the state $\varphi_1$, then we know that this state
is defined by  $\omega_0=\tau_3\id$ and
$h^{x}=h=\xi_0\id+\xi_3\sigma$. Define two  elements of
$\mathcal{B}_{W_n}$ by
$$\hat{\bh}_n:=\id^{x_{W_n}^{(1)}}\otimes\bigotimes_{x\in W_n\setminus \{x_{W_n}^{(1)}\}}h^{(x)}$$
$$\check{\bh}_{n}:=\sigma^{x_{W_n}^{(1)}}\otimes\bigotimes_{x\in W_n\setminus \{x_{W_n}^{(1)}\}}h^{(x)}$$

\begin{lemma}\label{f-p-11}
Let
$$\hat{\psi}_n:=\Tr_{n-1]}\big[\omega_{0}K_{[0,1]}\cdots K_{[n-1,n]}\hat{\bh}_{n}K_{[n-1,n]}^{*}\cdots K_{[0,1]}^{*}\big]$$
$$\check{\psi}_n:=\Tr_{n-1]}\big[\omega_{0}K_{[0,1]}\cdots K_{[n-1,n]}\check{\bh}_{n}K_{[n-1,n]}^{*}\cdots K_{[0,1]}^{*}\big]$$
Then there are two pairs of reals
$(\hat{\rho}_{1},\hat{\rho}_{2})$ and
$(\check{\rho}_{1},\check{\rho}_{2})$ depending on $\theta$ such
that
$$\left\{
  \begin{array}{ll}
    \hat{\psi}_n=\hat{\rho}_{1}+\hat{\rho}_{2}(\frac{\tau_1}{\tau_3}-1)^{n}, \\
\\
    \check{\psi}_n=\check{\rho}_{1}+\check{\rho}_{2}(\frac{\tau_1}{\tau_3}-1)^{n} \\
  \end{array}
\right.$$
\end{lemma}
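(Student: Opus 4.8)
The plan is to compute the two partial traces $\hat\psi_n$ and $\check\psi_n$ by descending induction on the levels $W_n, W_{n-1}, \dots, W_0$, using the explicit form \eqref{Ax} of the operator $A_{(x,(x,1),(x,2))}$ together with the algebraic relations $\sigma^2=\id$, $\Tr(\id)=1$, $\Tr(\sigma)=0$ on $M_2(\bC)$. The key observation is that when one applies the normalized partial trace $\Tr_{W_n}$ to $A_{(x,(x,1),(x,2))}(\id^{(x)}\otimes H^{(x,1)}\otimes H^{(x,2)})A_{(x,(x,1),(x,2))}^*$, where each $H^{(x,i)}$ is either $h$ or something of the form $c_0\id + c_3\sigma$ on the successor sites, the result is again an operator on site $x$ of the form $a\,\id^{(x)} + b\,\sigma^{(x)}$, with $(a,b)$ depending linearly on the pair of parameters coming from the two successors (indeed this is exactly the content of \eqref{eqder}: the map $h\mapsto \tau_1\Tr(h)^2\id + \tau_2\Tr(\sigma h)^2\id + \tau_3\Tr(h)\Tr(\sigma h)\sigma$). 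So the whole computation is governed by a fixed $2\times 2$ transfer matrix acting on the coordinates $(\Tr(\cdot),\Tr(\sigma\,\cdot))$ of the ``$h$-type'' observable living at each vertex.

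Concretely, since $h=\xi_0\id+\xi_3\sigma$ satisfies $\Tr(h)=\xi_0$ and $\Tr(\sigma h)=\xi_3$, and $(\xi_0,\xi_3)$ is by Proposition \ref{hh0} precisely a fixed point of the relevant substitution (so that $\Tr(h)$ is reproduced level by level and contributes a multiplicative constant, call it $c:=\frac{\tau_1+\tau_2+\tau_3}{4}$ appearing as in Lemma \ref{pq_n}), the only nontrivial propagation concerns the ``marked'' site $x_{W_n}^{(1)}$, where instead of $h$ one has $\id$ (for $\hat\psi$) or $\sigma$ (for $\check\psi$). First I would fix attention on the unique branch of the tree joining $x_{W_n}^{(1)}$ down to the root. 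Along this branch, at each step the marked observable at level $m$ is a combination $u_m\,\id + v_m\,\sigma$ which is carried to level $m-1$ by a $2\times 2$ matrix $T$ obtained from \eqref{Ax}: the diagonal entries of $T$ come from $\tau_1,\tau_2$ acting on $\Tr$, the off-diagonal from $\tau_3$; after normalizing out the constant $c$ accumulated from the $2^n-1$ ``inert'' vertices, one checks that the relevant eigenvalues of $T$ are $1$ and $\frac{\tau_1}{\tau_3}-1$. Diagonalizing $T$ then gives $u_n = \hat\rho_1 + \hat\rho_2(\tfrac{\tau_1}{\tau_3}-1)^n$ and likewise for $\check\psi_n$, with $(\hat\rho_1,\hat\rho_2)$, $(\check\rho_1,\check\rho_2)$ the coefficients in the eigenbasis determined by the respective initial data ($\id$ resp.\ $\sigma$ at the top), and these depend only on $\theta$ as claimed.

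The steps in order: (1) using \eqref{Ax} and the $M_2$ trace identities, establish the one-step recursion expressing $\Tr_{x]}\big(A\,(\id^{(x)}\otimes H^{(x,1)}\otimes H^{(x,2)})\,A^*\big)$ in terms of the coordinates of $H^{(x,1)},H^{(x,2)}$ — this is a direct expansion, already essentially done in the computation preceding \eqref{eqder}; (2) specialize the two successor observables to ``$h$ on all-but-one branch, marked observable on the distinguished branch'', and read off that the all-$h$ part simply multiplies by $c^{|W_{m-1}|}$ and reproduces $h$ at each non-marked vertex; (3) extract the scalar $2\times 2$ transfer matrix $T$ governing the distinguished branch, compute its eigenvalues $1$ and $\frac{\tau_1}{\tau_3}-1$, and diagonalize; (4) collect the constants from $\omega_0 = \tau_3\id$ and from the product $\prod_{m} K_{[m,m+1]}$ into the definitions of $\hat\rho_i,\check\rho_i$, and verify these are finite real functions of $\theta$ in the regime $\Delta(\theta)>0$.

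The main obstacle, I expect, is bookkeeping rather than conceptual: one must carefully track how the normalization constant $c=\frac{\tau_1+\tau_2+\tau_3}{4}$ and the factors $\xi_0,\xi_3$ interact with the non-normalized products $K_{[m,m+1]}K_{[m,m+1]}^*$, and make sure the ``inert'' part of the tree indeed collapses cleanly (this uses that $(\xi_0,\xi_3)$ solves \eqref{EQ1}, i.e.\ that $h$ is a genuine fixed point, together with Corollary \ref{state^nwithW_n} and Lemma \ref{LL}); a secondary subtlety is checking that the $2\times 2$ transfer matrix has the stated eigenvalue $\frac{\tau_1}{\tau_3}-1$ — one should verify this by direct computation of $\det T$ and $\tr T$ from the formulas for $\tau_1,\tau_2,\tau_3$, rather than guessing, since the precise off-diagonal structure (which mixes $\id$ and $\sigma$ through $\tau_3$) is what forces that value. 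Once the transfer matrix and its spectral decomposition are in hand, the closed form for $\hat\psi_n$ and $\check\psi_n$ is immediate.
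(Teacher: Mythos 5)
Your plan follows essentially the same route as the paper: a one-step partial-trace recursion derived from \eqref{Ax}, reduction to a $2\times2$ transfer matrix acting on the coordinates $(\Tr(\cdot),\Tr(\sigma\,\cdot))$ of the marked observable along the distinguished branch, and diagonalization. Two corrections to the bookkeeping, though. First, the inert (all-$h$) part of the tree contributes no multiplicative constant at all: \eqref{eq2} says $\Tr_{x]}\big(A(h\otimes h)A^*\big)=h$ exactly, so the factor $c^{|W_{m-1}|}$ you anticipate does not appear (the constant $\tfrac{\tau_1+\tau_2+\tau_3}{4}$ arises only in Lemma \ref{pq_n}, where $e_{11}$ is inserted at every site). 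Second, the eigenvalue check you rightly insist on carrying out by computing $\det T$ actually gives a different answer from the one you quote: the transfer matrix is $N=\begin{pmatrix}\tau_1\xi_0 & \tfrac12\tau_3\xi_3\\ \tau_2\xi_3 & \tfrac12\end{pmatrix}$, and since $\xi_0=1/\tau_3$ and $\tau_2\tau_3\xi_3^2=1-\tau_1/\tau_3$ one finds $\det N=\tfrac{\tau_1}{\tau_3}-\tfrac12$, so the eigenvalues are $1$ and $\tfrac{\tau_1}{\tau_3}-\tfrac12$, not $\tfrac{\tau_1}{\tau_3}-1$. The base of the exponent in the lemma as printed is a typo: the paper's own proof and all subsequent uses (Propositions \ref{6.6}, \ref{6.8}, etc.) carry $\big(\tfrac{\tau_1}{\tau_3}-\tfrac12\big)^n$. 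With these two adjustments your argument coincides with the paper's.
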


\begin{proof} One can see that
\begin{eqnarray*}
\left(
  \begin{array}{ll}
    \hat{\psi}_n \\
\\
    \check{\psi}_n
  \end{array}
\right) &=& \left(
     \begin{array}{c}
       \Tr_{n]}\big[\omega_{0}K_{[0,1]}\cdots K_{[n-2,n-1]}\Tr_{n-1]}[K_{[n-1,n]}\hat{\bh}_{n}K_{[n-1,n]}^{*}]K_{[n-2,n-1]}^{*}\cdots K_{[0,1]}^{*}\big],  \\
       \\
       \Tr_{n]}\big[\omega_{0}K_{[0,1]}\cdots K_{[n-2,n-1]}Tr_{n-1]}[K_{[n-1,n]}\check{\bh}_{n}K_{[n-1,n]}^{*}]K_{[n-2,n-1]}^{*}\cdots K_{[0,1]}^{*}\big]
     \end{array}
\right).
\end{eqnarray*}
After small calculations, we find
 $$\left\{
  \begin{array}{ll}
  \Tr_{x]}\left[A_{(x,(x,1),(x,2))}\big(\id^{(x)}\otimes\id^{(x,1)}\otimes h^{(x,2)}\big)A_{(x,(x,1),(x,2))}\right]=\tau_{1}\xi_0\id^{(x)}+\frac{1}{2}\tau_{3}\xi_3\sigma^{(x)} \\
   \\
   \Tr_{x]}\big[A_{(x,(x,1),(x,2))}\big(\id^{(x)}\otimes\sigma^{(x,1)}\otimes h_{(\xi_{0},\xi_{3})}^{(x,2)}\big)A_{(x,(x,1),(x,2))}\big]=\tau_2\xi_3\id^{(x)}+\frac{1}{2}\sigma^{(x)}
  \end{array}
\right.$$

Hence,  one gets
$$\left\{
\begin{array}{ll}
\Tr_{n-1]}K_{[n-1,n]}\hat{\bh}_{n}K_{[n-1,n]}^{*}=\tau_1\xi_0\hat{h}_{n-1}+\frac{1}{2}\tau_3\xi_3\check{h}_{n-1},\\
\\
\Tr_{n-1]}K_{[n-1,n]}\check{\bh}_{n}K_{[n-1,n]}^{*}=\tau_2\xi_3\hat{h}_{n-1}+\frac{1}{2}\check{h}_{n-1}.
\end{array}
\right.$$

Therefore,
\begin{eqnarray*}
\left(
  \begin{array}{ll}
    \hat{\psi}_n \\[2mm]
    \check{\psi}_n
  \end{array}
\right)
&=&
\left(
\begin{array}{c}
\tau_1\xi_0\hat{\psi}_{n-1}+\frac{1}{2}\tau_3\xi_3\check{\psi}_{n-1}
\\[2mm]
\tau_2\xi_3\hat{\psi}_{n-1}+\frac{1}{2}\check{\psi}_{n-1}
\end{array}
\right)\nonumber\\[2mm]
&=&
\left(
   \begin{array}{cc}
   \tau_1\xi_0 & \        \ \frac{1}{2}\tau_3\xi_3 \\[2mm]
     \tau_2\xi_3 &\        \ \frac{1}{2} \\

     \end{array}
     \right)
\left(
\begin{array}{c}
  \hat{\psi}_{n-1} \\[2mm]

  \check{\psi}_{n-1} \\
\end{array}
\right)\nonumber\\
\vdots\nonumber\\
&=&\left(
   \begin{array}{cc}
   \tau_1\xi_0 & \        \ \frac{1}{2}\tau_3\xi_3 \\[2mm]
     \tau_2\xi_3 &\        \ \frac{1}{2} \\
     \end{array}
     \right)^{n}
\left(
\begin{array}{c}
  \hat{\psi}_{0} \\[2mm]
  \check{\psi}_{0} \\
\end{array}
\right),
\end{eqnarray*}
where $$\left\{
\begin{array}{c}
  \hat{\psi}_{0}=\Tr (\omega_{0})=\frac{1}{\xi_0}\\[2mm]
  \check{\psi}_{0}=\Tr (\omega_{0}.\sigma) = 0 \\
\end{array}
\right.
$$
The matrix
$$N := \left(
   \begin{array}{cc}
   \tau_1\xi_0 & \        \ \frac{1}{2}\tau_3\xi_3 \\[2mm]
     \tau_2\xi_3 &\        \ \frac{1}{2} \\
     \end{array}
     \right)
     $$
     can be written in diagonal form by:

     $$N=P
     \left(
   \begin{array}{cc}
   1  & \        \ 0 \\
     0 &\        \ \frac{\tau_{1}}{\tau3}-\frac{1}{2} \\
     \end{array}
     \right)
     P^{-1}
     $$
     where
     $$
       P=\left(
   \begin{array}{cc}
   \frac{\tau_3}{2\tau_2} & \        \  -\frac{\xi_3}{\xi_0} \\
     \frac{\xi_3}{\xi_0} &\        \ 1 \\
     \end{array}
     \right), \     \  \det(P)=\frac{3\tau_{3}-2\tau_{1}}{2\tau_2}$$

 So,
     \begin{eqnarray*}
     \left(
       \begin{array}{c}
         \hat{\psi}_n \\
               \check{\psi}_n \\
       \end{array}
     \right)
     &=&P \left(
   \begin{array}{cc}
   1  & \        \ 0 \\
     0 &\        \ (\frac{\tau_{1}}{\tau3}-\frac{1}{2} )^{n}\\
     \end{array}
     \right)P^{-1}
     \left(
     \begin{array}{c}
       \frac{1}{\xi_0} \\
       \\
       0
      \end{array}
     \right)\nonumber\\
      &=&
     \left(
  \begin{array}{ll}
    \hat{\rho}_{1}+\hat{\rho}_{2}(\frac{\tau_1}{\tau_3}-\frac{1}{2})^{n}\\[2mm]
    \check{\rho}_{1}+\check{\rho}_{2}(\frac{\tau_1}{\tau_3}-\frac{1}{2})^{n} \\
  \end{array}
\right).
\end{eqnarray*}
where
\begin{eqnarray}\label{r1}
&& \hat{\rho}_{1}=\frac{2\tau_{3}^{2}}{3\tau_{3}-2\tau_{1}},  \ \
\hat{\rho}_{2}=\frac{2\tau_{3}(\tau_3-\tau_1)}{3\tau_{3}-2\tau_{1}},\\[2mm]
&& \check{\rho}_{1}=\frac{2\tau_2 \tau_{3}^{2}\xi_{3}}{
3\tau_3-2\tau_2},\      \   \check{\rho}_{2}=-\frac{2\tau_2
\tau_{3}^{2}\xi_{3}}{ 3\tau_3-2\tau_2}.\label{r2}
\end{eqnarray}
This completes the proof.
\end{proof}

\begin{proposition}\label{6.6} For each $n\in\bn$ one has
\begin{eqnarray*}
\varphi_1(E_{\Lambda_n})&=&\frac{1}{2}\left[(\xi_0+\xi_3)(\xi_0\tau_2+\xi_3\tau_2)\hat{\rho}_{1}+\frac{\tau_3}{2}(\xi_0+\xi_3)^{2}\check{\rho}_{1}\right]\nonumber\\
&&+\frac{1}{2}\left[(\xi_0+\xi_3)(\xi_0\tau_1+\xi_3\tau_2)\hat{\rho}_{2}+\frac{1}{2}\tau_3(\xi_0+\xi_3)^{2}\check{\rho}_{2}\right]\bigg(\frac{\tau_1}{\tau_3}-\frac{1}{2}\bigg)^{n-1}.
\end{eqnarray*}
\end{proposition}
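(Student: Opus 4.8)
The plan is to peel off the topmost level $W_n$ in the representation \eqref{F1} of $\varphi_1$, and to thereby reduce $\varphi_1(E_{\Lambda_n})$ to the two quantities $\hat\psi_{n-1}$ and $\check\psi_{n-1}$ already evaluated in Lemma \ref{f-p-11}; this is exactly what produces the power $\big(\tau_1/\tau_3-1/2\big)^{n-1}$ in the claimed formula.

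First I would exploit that every operator entering \eqref{F1} is diagonal: by \eqref{Ax} each $K_{[m,m+1]}$ is a finite sum of tensor products of $\id$ and $\sigma$, the matrix $h=\xi_0\id+\xi_3\sigma$ and $\omega_0=\tau_3\id$ are diagonal, and $E_{\Lambda_n}=e_{11}^{x_{W_n}^{(1)}}$ is diagonal. Hence all these operators commute pairwise, and in \eqref{F1} the element $E_{\Lambda_n}$ may be transported next to $\bh_n$:
\[
\varphi_1(E_{\Lambda_n})=\Tr\big(\omega_0\,K_{[0,1]}\cdots K_{[n-1,n]}\,(\bh_nE_{\Lambda_n})\,K_{[n-1,n]}^{*}\cdots K_{[0,1]}^{*}\big).
\]
Since $\sigma e_{11}=e_{11}=\tfrac12(\id+\sigma)$, we have $h\,e_{11}=\tfrac{\xi_0+\xi_3}{2}(\id+\sigma)$; writing $y_*\in W_{n-1}$ for the vertex with $x_{W_n}^{(1)}\in S(y_*)$ (explicitly $y_*=x_{W_{n-1}}^{(1)}$ and $x_{W_n}^{(1)}=(y_*,1)$), this gives
\[
\bh_nE_{\Lambda_n}=\frac{\xi_0+\xi_3}{2}\Big(\id^{(y_*,1)}+\sigma^{(y_*,1)}\Big)\otimes h^{(y_*,2)}\otimes\!\!\!\bigotimes_{x\in W_n\setminus S(y_*)}\!\!\!h^{x}.
\]

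Next, exactly as in the proof of Lemma \ref{f-p-11}, I would trace out $W_n$. Since the blocks $\{y\}\cup S(y)$, $y\in W_{n-1}$, are pairwise disjoint, $K_{[n-1,n]}(\bh_nE_{\Lambda_n})K_{[n-1,n]}^{*}$ factorizes over $y\in W_{n-1}$, and after tracing out $W_n$ each factor with $y\neq y_*$ collapses to $h^{y}$ by the fixed-point equation \eqref{eqder} (equivalently \eqref{EQ2}), while the factor at $y_*$ reduces, via $h\,e_{11}=\tfrac{\xi_0+\xi_3}{2}(\id+\sigma)$ together with the two elementary identities recorded inside the proof of Lemma \ref{f-p-11}, to $c_0\,\id^{y_*}+c_1\,\sigma^{y_*}$ with
\[
c_0=\frac{\xi_0+\xi_3}{2}\,(\tau_1\xi_0+\tau_2\xi_3),\qquad c_1=\frac{\xi_0+\xi_3}{4}\,(\tau_3\xi_3+1)=\frac{\tau_3}{4}\,(\xi_0+\xi_3)^2,
\]
the last equality using $\xi_0=1/\tau_3$ from \eqref{xi01} (so that $\tau_3\xi_3+1=\tau_3\xi_3+\tau_3\xi_0=\tau_3(\xi_0+\xi_3)$). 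Thus tracing out $W_n$ leaves on $\cb_{W_{n-1}}$ precisely $c_0\hat\bh_{n-1}+c_1\check\bh_{n-1}$, i.e. the level-$(n-1)$ operators of Lemma \ref{f-p-11} (here is where $y_*=x_{W_{n-1}}^{(1)}$ is used), and hence by linearity
\[
\varphi_1(E_{\Lambda_n})=c_0\,\hat\psi_{n-1}+c_1\,\check\psi_{n-1}.
\]

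It then only remains to substitute $\hat\psi_{n-1}=\hat\rho_1+\hat\rho_2(\tfrac{\tau_1}{\tau_3}-\tfrac12)^{n-1}$ and $\check\psi_{n-1}=\check\rho_1+\check\rho_2(\tfrac{\tau_1}{\tau_3}-\tfrac12)^{n-1}$ from Lemma \ref{f-p-11}, to separate the $n$-independent part $c_0\hat\rho_1+c_1\check\rho_1$ from the coefficient $c_0\hat\rho_2+c_1\check\rho_2$ of $(\tfrac{\tau_1}{\tau_3}-\tfrac12)^{n-1}$, and to simplify using \eqref{r1}--\eqref{r2}; this yields the announced identity. I expect the only genuinely delicate point to be the level bookkeeping: one must make sure that tracing out $W_n$ reproduces exactly the configuration of Lemma \ref{f-p-11} at the level $W_{n-1}$, so that $\hat\psi_{n-1}$ and $\check\psi_{n-1}$ are legitimately applicable, and that the relation $\tau_3\xi_0=1$ is invoked to put the $y_*$-factor in the above form; everything else is routine.
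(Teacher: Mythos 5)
Your proposal is correct and follows essentially the same route as the paper: insert $E_{\Lambda_n}$ next to $\bh_n$ in \eqref{F1}, take the partial trace over $W_n$ so that all vertices of $W_{n-1}$ other than $x^{(1)}_{W_{n-1}}$ collapse to $h$ by the fixed-point equation while the distinguished vertex produces $\tfrac{1}{2}(\xi_0+\xi_3)(\tau_1\xi_0+\tau_2\xi_3)\hat{\bh}_{n-1}+\tfrac{\tau_3}{4}(\xi_0+\xi_3)^{2}\check{\bh}_{n-1}$ via the two identities from the proof of Lemma \ref{f-p-11}, and then substitute $\hat\psi_{n-1},\check\psi_{n-1}$. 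Your explicit justification of the commutation step (diagonality of all operators) and of the simplification $\tau_3\xi_3+1=\tau_3(\xi_0+\xi_3)$ only makes explicit what the paper leaves implicit.
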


\begin{proof} From \eqref{F1} we have
\begin{eqnarray*}
\varphi_1(E_{\Lambda_n})&=&\Tr\left[ \omega_0K_{[0,1]}\cdots
K_{[n-2,n-1]}\Tr_{n-1]}(K_{[n-1,n]}\bh_n E_{\Lambda_n}
K_{[n-1,n]}^{*})\cdots K_{[0,1]}^{*}\right].
\end{eqnarray*}
One can calculate that
\begin{eqnarray*}
\Tr_{n-1]} (K_{[n-1,n]}\bh_n E_{\Lambda_n}
K_{[n-1,n]}^{*})&=&Tr_{x_{W_{n-1}}^{(1)}]}\bigg(A_{(x_{W_{n-1}}^{(1)},
x_{W_n}^{(1)}, x_{W_n}^{(2)})}
\big(\id^{x_{W_{n-1}}^{(1)}}\otimes e_{1,1}h^{x_{W_n}^{(1)}}\otimes h^{x_{W_n}^{(2)}}\big)\nonumber\\
&&A_{(x_{W_{n-1}}^{(1)}, x_{W_n}^{(1)}, x_{W_n}^{(2)})}^{*}\bigg)\otimes\bigotimes_{x\in W_{n-1}\setminus \{x_{W_n}^{(1)}\}}h^{(x)}\nonumber\\
&=&\frac{1}{2}\left[(\xi_0+\xi_3)(\tau_1\xi_0+\tau_2\xi_3)\hat{\bh}_{n-1}+\frac{\tau_3}{2}(\xi_0+\xi_3)^{2}\check{\bh}_{n-1}\right].
\end{eqnarray*}
Hence,
\begin{eqnarray*}
\varphi_1(E_{\Lambda_n})
&=&\frac{1}{2}(\xi_0+\xi_3)(\tau_1\xi_0+\tau_2\xi_3)\Tr\left[ \omega_0K_{[0,1]}\cdots K_{[n-2,n-1]}\hat{\bh}_{n-1} K_{[n-2,n-1]}^{*}\cdots K_{[0,1]}^{*}\right]\nonumber\\
&&+\tau_3\bigg(\frac{\xi_0+\xi_3}{2}\bigg)^{2}\Tr\left[
\omega_0K_{[0,1]}\cdots K_{[n-2,n-1]}\check{\bh}_{n-1}
K_{[n-2,n-1]}^{*}\cdots
 K_{[0,1]}^{*}\right].\nonumber\\
&=&\frac{1}{2}\left[(\xi_0+\xi_3)(\tau_1\xi_0+\tau_2\xi_3)\hat{\psi}_{n-1}+\frac{\tau_3}{2}(\xi_0+\xi_3)^{2}\check{\psi}_{n-1}\right].%\nonumber\\
\end{eqnarray*}
Now using the values of $\hat{\psi}_{n-1}$  and
$\check{\psi}_{n-1}$ given by the previous lemma we obtain the
result.
\end{proof}

 Now we consider the state $\varphi_2$. Recall that this state is defined by  $\omega_0=\frac{1}{\xi_0}\id$ and
$h^{x}=h'=\xi_0\id-\xi_3\sigma$. Define two elements of
$\mathcal{B}_{W_n}$ by
$$\hat{\bh'}_n:=\id^{x_{W_n}^{(1)}}\otimes\bigotimes_{x\in W_n\setminus \{x_{W_n}^{(1)}\}}h'^{(x)}$$
$$\check{\bh'}_{n}:=\sigma^{x_{W_n}^{(1)}}\otimes\bigotimes_{x\in W_n\setminus \{x_{W_n}^{(1)}\}}h'^{(x)}$$

Using the same argument like in the proof of Lemma \ref{f-p-11} we
can prove the following auxiliary fact.

\begin{lemma}\label{6.7} Let
$$\hat{\phi}_n:=\Tr_{n-1]}\big[\omega_{0}.K_{[0,1]}\cdots K_{[n-1,n]}\hat{\bh'}_{n}K_{[n-1,n]}^{*}\cdots K_{[0,1]}^{*}\big]$$
$$\check{\phi}_n:=\Tr_{n-1]}\big[\omega_{0} K_{[0,1]}\cdots K_{[n-1,n]}\check{\bh'}_{n}K_{[n-1,n]}^{*}\cdots K_{[0,1]}^{*}\big]$$
Then there are two pairs of reals  $(\hat{\pi}_{1},\hat{\pi}_{2})$
and $(\check{\pi}_{1},\check{\pi}_{2})$ depending on $\theta$ such
that
$$\left\{
  \begin{array}{ll}
    \hat{\phi}_n=\hat{\pi}_{1}+\hat{\pi}_{2}(\frac{\tau_1}{\tau_3}-\frac{1}{2})^{n}, \\
    \check{\phi}_n=\check{\pi}_{1}+\check{\pi}_{2}(\frac{\tau_1}{\tau_3}-\frac{1}{2})^{n} \\
  \end{array}
\right.$$ where
$$\hat{\pi}_{1}=\frac{\tau_3^2}{3\tau_3-2\tau_1},  \      \   \hat{\pi}_{2}=\frac{2\tau_{3}(\tau_3-\tau_1)}{3\tau_{3}-2\tau_2},  $$
$$ \check{\pi}_{1}=-\frac{2\tau_{2}\tau_3^2\xi_{3}}{ 3\tau_3-2\tau_1},\      \   \check{\pi}_{2}=\frac{2\tau_{2}\tau_3^{2}\xi_{3}}{ 3\tau_3-2\tau_1}.$$
\end{lemma}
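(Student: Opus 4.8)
The plan is to mirror, essentially verbatim, the computation carried out in the proof of Lemma~\ref{f-p-11}, replacing the state $\varphi_1$ by $\varphi_2$, i.e.\ replacing $\omega_0=\tau_3\id$, $h=\xi_0\id+\xi_3\sigma$ by $\omega_0=\frac{1}{\xi_0}\id$, $h'=\xi_0\id-\xi_3\sigma$, and $\hat{\bh}_n,\check{\bh}_n$ by $\hat{\bh'}_n,\check{\bh'}_n$. First I would record the single-site partial traces analogous to the two boxed identities in Lemma~\ref{f-p-11}: using \eqref{Ax} and the relations $h'e_{11}=(\xi_0-\xi_3)e_{11}$, $h' e_{22}=(\xi_0+\xi_3)e_{22}$ (equivalently $\Tr h'=\xi_0$, $\Tr(\sigma h')=-\xi_3$), one computes
\[
\Tr_{x]}\big[A_{(x,(x,1),(x,2))}\big(\id^{(x)}\otimes\id^{(x,1)}\otimes h'^{(x,2)}\big)A_{(x,(x,1),(x,2))}\big]=\tau_1\xi_0\,\id^{(x)}-\tfrac12\tau_3\xi_3\,\sigma^{(x)},
\]
\[
\Tr_{x]}\big[A_{(x,(x,1),(x,2))}\big(\id^{(x)}\otimes\sigma^{(x,1)}\otimes h'^{(x,2)}\big)A_{(x,(x,1),(x,2))}\big]=-\tau_2\xi_3\,\id^{(x)}+\tfrac12\sigma^{(x)},
\]
the only change from Lemma~\ref{f-p-11} being the sign of $\xi_3$. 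Multiplying the tensor factors over $x\in\overrightarrow{W}_{n-1}$ and taking $\Tr_{n-1]}$ as in that proof yields the two-term recursion
\[
\begin{pmatrix}\hat{\phi}_n\\[1mm]\check{\phi}_n\end{pmatrix}
=\begin{pmatrix}\tau_1\xi_0 & -\tfrac12\tau_3\xi_3\\[1mm] -\tau_2\xi_3 & \tfrac12\end{pmatrix}
\begin{pmatrix}\hat{\phi}_{n-1}\\[1mm]\check{\phi}_{n-1}\end{pmatrix}
=N'^{\,n}\begin{pmatrix}\hat{\phi}_0\\[1mm]\check{\phi}_0\end{pmatrix},
\]
with $\hat{\phi}_0=\Tr(\omega_0)=\tfrac{1}{\xi_0}$ and $\check{\phi}_0=\Tr(\omega_0\sigma)=0$.

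Next I would diagonalize the matrix $N'$. Since $\tau_1\xi_0=\tau_1/\tau_3$, the trace of $N'$ is $\tau_1/\tau_3+\tfrac12$ and the determinant is $\tfrac12\tau_1\xi_0-\tfrac12\tau_2\tau_3\xi_3^2=\tfrac12(\tau_1/\tau_3-1)$ (using $\tau_2\xi_3^2\tau_3=\tau_3-\tau_1$ from \eqref{xi01}), exactly as for the matrix $N$ in Lemma~\ref{f-p-11}; hence the eigenvalues are again $1$ and $\tfrac{\tau_1}{\tau_3}-\tfrac12$. Writing $N'=P'\,\mathrm{diag}\big(1,\tfrac{\tau_1}{\tau_3}-\tfrac12\big)\,P'^{-1}$ with $P'$ the obvious sign-modified analogue of $P$ (i.e.\ $P'=\left(\begin{smallmatrix}\tfrac{\tau_3}{2\tau_2}&\tfrac{\xi_3}{\xi_0}\\[1mm]-\tfrac{\xi_3}{\xi_0}&1\end{smallmatrix}\right)$, up to normalization of eigenvectors), and applying $N'^{\,n}$ to $\big(\tfrac{1}{\xi_0},0\big)^{T}$, one reads off
\[
\hat{\phi}_n=\hat{\pi}_1+\hat{\pi}_2\Big(\tfrac{\tau_1}{\tau_3}-\tfrac12\Big)^{n},\qquad
\check{\phi}_n=\check{\pi}_1+\check{\pi}_2\Big(\tfrac{\tau_1}{\tau_3}-\tfrac12\Big)^{n},
\]
where $\hat{\pi}_1,\hat{\pi}_2,\check{\pi}_1,\check{\pi}_2$ are the entries of $P'\,(\cdot)\,P'^{-1}$ acting on the initial vector; carrying out that $2\times2$ computation produces precisely the claimed closed forms $\hat{\pi}_1=\tfrac{\tau_3^2}{3\tau_3-2\tau_1}$, etc. (the discrepancies between $\hat{\rho}_i$ in \eqref{r1} and $\hat{\pi}_i$ here come solely from the different normalization $\omega_0=\tfrac{1}{\xi_0}\id$ versus $\omega_0=\tau_3\id$ and from $\xi_3\mapsto-\xi_3$).

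The argument is almost entirely bookkeeping, so there is no genuine obstacle; the one point requiring care is to keep track of the sign changes induced by $h'=\xi_0\id-\xi_3\sigma$ (equivalently $\Tr(\sigma h')=-\xi_3$) consistently through the two single-site partial traces, the off-diagonal entries of $N'$, and the eigenvector matrix $P'$, and to verify that the determinant identity $\det N'=\tfrac12(\tau_1/\tau_3-1)$ still holds so that the second eigenvalue is again $\tfrac{\tau_1}{\tau_3}-\tfrac12$ rather than something new. Once the recursion matrix and its spectral decomposition are in hand, the stated formulas for $\hat{\phi}_n,\check{\phi}_n$ and the explicit constants $\hat{\pi}_i,\check{\pi}_i$ follow by the same elementary linear algebra as in Lemma~\ref{f-p-11}, which is why the present statement simply asserts ``using the same argument.''
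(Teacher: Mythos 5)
Your proposal is correct and is exactly the paper's route: the paper's entire proof of this lemma is the remark that it follows ``using the same argument as in the proof of Lemma~\ref{f-p-11}'', and you have simply spelled that argument out with the sign change $\xi_3\mapsto-\xi_3$ tracked through the single-site partial traces, the $2\times2$ recursion matrix, and its diagonalization. One small arithmetic slip worth fixing: from \eqref{xi01} one has $\tau_2\tau_3^{2}\xi_3^{2}=\tau_3-\tau_1$, i.e.\ $\tau_2\tau_3\xi_3^{2}=1-\tau_1/\tau_3$, so $\det N'=\tfrac12\tau_1\xi_0-\tfrac12\tau_2\tau_3\xi_3^{2}=\tfrac{\tau_1}{\tau_3}-\tfrac12$ rather than $\tfrac12\big(\tfrac{\tau_1}{\tau_3}-1\big)$ --- which is in fact what the determinant must equal for the eigenvalues to be $1$ and $\tfrac{\tau_1}{\tau_3}-\tfrac12$ as you (and the paper) claim.
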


\begin{proposition}\label{6.8} For each $n\in\bn$ one has
\begin{eqnarray*}
\varphi_2(E_{\Lambda_n})&=&\frac{1}{2}\left[(\xi_0-\xi_3)(\xi_0\tau_1-\xi_3\tau_2)\hat{\pi}_{1}+\frac{\tau_3}{2}(\xi_0-\xi_3)^{2}\check{\pi}_{1}\right]\\[2mm]
&&+\frac{1}{2}\left[(\xi_0-\xi_3)(\xi_0\tau_1-\xi_3\tau_2)\hat{\pi}_{2}+\frac{\tau_3}{2}(\xi_0-\xi_3)^{2}\check{\pi}_{2}\right]
\bigg(\frac{\tau_1}{\tau_3}-\frac{1}{2}\bigg)^{n-1}.
\end{eqnarray*}
\end{proposition}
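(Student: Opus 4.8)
The proof runs along the same lines as that of Proposition \ref{6.6}, with $\varphi_1$ replaced by $\varphi_2$. Since the two solutions differ only by $\xi_3\mapsto-\xi_3$ (and $\omega_0=\tau_3\id$ replaced by $\omega_0=\tfrac{1}{\xi_0}\id$), the net effect is a uniform sign change in every coefficient that carries a factor $\xi_3$. Starting from \eqref{F2}, the first step is to peel off the top generation $W_n$ and write
\[
\varphi_2(E_{\Lambda_n})=\Tr\Big[\omega_0K_{[0,1]}\cdots K_{[n-2,n-1]}\,\Tr_{n-1]}\big(K_{[n-1,n]}\bh'_nE_{\Lambda_n}K_{[n-1,n]}^{*}\big)\,K_{[n-2,n-1]}^{*}\cdots K_{[0,1]}^{*}\Big].
\]

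Next I would evaluate the inner partial trace. Only the triple $(x_{W_{n-1}}^{(1)},x_{W_n}^{(1)},x_{W_n}^{(2)})$ feels the projector, through $e_{11}h'=(\xi_0-\xi_3)e_{11}$, while every other triple at generation $n-1$ carries $\id^{(x)}\otimes h'^{(x,1)}\otimes h'^{(x,2)}$ and is collapsed by the compatibility relation \eqref{eq2} (for the solution $h^x\equiv h'$) back to $h'^{(x)}$. Hence, using \eqref{Ax} together with the sign-flipped versions of the two elementary identities employed in the proof of Lemma \ref{f-p-11}, namely
\[
\Tr_{x]}\!\big[A_{(x,(x,1),(x,2))}(\id^{(x)}\otimes\id^{(x,1)}\otimes h'^{(x,2)})A_{(x,(x,1),(x,2))}^{*}\big]=\tau_1\xi_0\id^{(x)}-\tfrac12\tau_3\xi_3\sigma^{(x)},
\]
\[
\Tr_{x]}\!\big[A_{(x,(x,1),(x,2))}(\id^{(x)}\otimes\sigma^{(x,1)}\otimes h'^{(x,2)})A_{(x,(x,1),(x,2))}^{*}\big]=-\tau_2\xi_3\id^{(x)}+\tfrac12\sigma^{(x)},
\]
and the identity $\tau_3\xi_0=1$ (so that $1-\tau_3\xi_3=\tau_3(\xi_0-\xi_3)$), one obtains
\[
\Tr_{n-1]}\big(K_{[n-1,n]}\bh'_nE_{\Lambda_n}K_{[n-1,n]}^{*}\big)=\tfrac12(\xi_0-\xi_3)(\xi_0\tau_1-\xi_3\tau_2)\,\hat{\bh'}_{n-1}+\tfrac{\tau_3}{4}(\xi_0-\xi_3)^{2}\,\check{\bh'}_{n-1}.
\]
Plugging this back in and recognizing the two surviving traces as $\hat{\phi}_{n-1}$ and $\check{\phi}_{n-1}$ gives $\varphi_2(E_{\Lambda_n})=\tfrac12(\xi_0-\xi_3)(\xi_0\tau_1-\xi_3\tau_2)\hat{\phi}_{n-1}+\tfrac{\tau_3}{4}(\xi_0-\xi_3)^{2}\check{\phi}_{n-1}$; finally I would substitute the closed forms $\hat{\phi}_{n-1}=\hat{\pi}_1+\hat{\pi}_2(\tfrac{\tau_1}{\tau_3}-\tfrac12)^{n-1}$ and $\check{\phi}_{n-1}=\check{\pi}_1+\check{\pi}_2(\tfrac{\tau_1}{\tau_3}-\tfrac12)^{n-1}$ from Lemma \ref{6.7} and split off the $n$-independent part, obtaining exactly the asserted expression.

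The step that really needs care is the verification of the two partial-trace identities above together with the top-generation reduction: one must expand $A_{(x,(x,1),(x,2))}(\,\cdot\,)A_{(x,(x,1),(x,2))}^{*}$ using \eqref{Ax}, take the normalized trace over the two successor sites, and keep track of the signs through all the cross terms — this is precisely where the computation departs from the $\varphi_1$-case of Proposition \ref{6.6}. Everything else is linearity plus the recursion for $\hat{\phi},\check{\phi}$ already established in Lemma \ref{6.7}, so I expect the only genuine source of error to be those sign-sensitive cross terms.
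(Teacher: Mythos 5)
Your proposal is correct and follows essentially the same route as the paper: peel off the top generation via \eqref{F2}, compute $\Tr_{n-1]}(K_{[n-1,n]}\bh'_nE_{\Lambda_n}K_{[n-1,n]}^{*})=\frac{1}{2}(\xi_0-\xi_3)(\tau_1\xi_0-\tau_2\xi_3)\hat{\bh'}_{n-1}+\frac{\tau_3}{4}(\xi_0-\xi_3)^{2}\check{\bh'}_{n-1}$, identify the remaining traces with $\hat{\phi}_{n-1},\check{\phi}_{n-1}$, and substitute the closed forms from Lemma \ref{6.7}. You in fact supply more detail than the paper (which dismisses the partial-trace computation with ``we easily calculate''), and your sign-flipped elementary identities and the use of $\tau_3\xi_0=1$ check out.
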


\begin{proof} From \eqref{F2} we find
\begin{eqnarray}\label{6.81}
\varphi_2(E_{\Lambda_n})=\Tr\left[\omega_0K_{[0,1]}\cdots
K_{[n-2,n-1]}\Tr_{n-1]}(K_{[n-1,n]}\bh'_n E_{\Lambda_n}
K_{[n-1,n]}^{*})\cdots K_{[0,1]}^{*}\right].
\end{eqnarray}

We easily calculate that
\begin{eqnarray*}
\Tr_{n-1]} (K_{[n-1,n]}h'_n a_{\Lambda_n}
K_{[n-1,n]}^{*})=\frac{1}{2}(\xi_0-\xi_3)(\tau_1\xi_0-\tau_2\xi_3)\hat{\bh'}_{n-1}+\tau_3\bigg(\frac{\xi_0-\xi_3}{2}\bigg)^{2}\check{\bh'}_{n-1}.
\end{eqnarray*}
Hence, from  \eqref{6.81} one gets
\begin{eqnarray*}
\varphi_2(E_{\Lambda_n})
=\frac{1}{2}\left[(\xi_0-\xi_3)(\tau_1\xi_0-\tau_2\xi_3)\hat{\phi}_{n-1}+\frac{\tau_3}{2}(\xi_0-\xi_3)^{2}\check{\phi}_{n-1}\right].%\nonumber\\
\end{eqnarray*}
Using the values of $\hat{\phi}_{n-1}$  and $\check{\phi}_{n-1}$
given in Lemma \ref{6.7}, we obtain the desired assertion.
\end{proof}

\begin{theorem}\label{6.9}
The two QMC $\varphi_1$ and $\varphi_2$ are not quasi-equivalent.
\end{theorem}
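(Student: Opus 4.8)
The plan is to invoke the factoriality criterion of Theorem~\ref{br-q}. Since $\varphi_1$ and $\varphi_2$ are factor states, were they quasi-equivalent, then for every $\ve>0$ there would be a finite volume $\Lambda\subset L$ with $|\varphi_1(a)-\varphi_2(a)|<\ve\|a\|$ for all $a$ localized in a finite set disjoint from $\Lambda$. I will contradict this by means of the observables $E_{\Lambda_n}$. The key observation is that, although $E_{\Lambda_n}$ is written as an element of $\cb_{\Lambda_n}$, after the usual identification by tensoring with units it is simply the projection $e_{11}$ placed at the single vertex $x_{W_n}^{(1)}\in W_n$; hence $\|E_{\Lambda_n}\|=1$ and $E_{\Lambda_n}\in\cb_{\{x_{W_n}^{(1)}\}}$, a one-vertex region disjoint from any prescribed finite $\Lambda$ as soon as $n$ exceeds the depth of $\Lambda$. (Growing regions such as the projections $p_n$ of Lemma~\ref{pq_n} are not suited to this criterion, as their $\varphi_i$-expectations generically collapse in the limit; a single-vertex observable at ever increasing depth is the right choice.) It therefore suffices to show that $|\varphi_1(E_{\Lambda_n})-\varphi_2(E_{\Lambda_n})|$ stays bounded away from $0$ as $n\to\infty$.

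For this I would pass to the limit. Under the standing hypothesis $\Delta(\theta)>0$ one has $0<\tau_1<\tau_3$ (recall $\Delta(\theta)=4(\tau_3-\tau_1)$ and $\tau_1,\tau_3>0$ for $\theta>1$), hence $|\tfrac{\tau_1}{\tau_3}-\tfrac12|<\tfrac12<1$, so the geometric remainder terms in Propositions~\ref{6.6} and~\ref{6.8} vanish as $n\to\infty$. Consequently $\varphi_i(E_{\Lambda_n})\to c_i$ $(i=1,2)$, where $c_i$ denotes the $\theta$-dependent constant term of the respective formula. Substituting $\xi_0=1/\tau_3$, $\xi_3=\sqrt{\tau_3-\tau_1}/(\tau_3\sqrt{\tau_2})>0$, together with the values of $\hat\rho_1,\check\rho_1$ from \eqref{r1},\eqref{r2} and of $\hat\pi_1,\check\pi_1$ from Lemma~\ref{6.7}, a routine simplification is expected to yield $c_1+c_2=1$ and
\[
c_1-c_2=\frac{\tau_3\,\xi_3\,(2\tau_2+\tau_3)}{3\tau_3-2\tau_1}.
\]

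The conclusion is then immediate. Since $\tau_1,\tau_2,\tau_3>0$, $\xi_3>0$, and $3\tau_3-2\tau_1=\tau_3+2(\tau_3-\tau_1)>0$, we obtain $c_1-c_2>0$, so $c_1\neq c_2$. (Conceptually, $c_1+c_2=1$ is dictated by the $\bZ_2$ spin-flip symmetry exchanging $\varphi_1$ and $\varphi_2$: conjugation at every vertex by the off-diagonal Pauli matrix fixes $\omega_0$ and each $A_{(x,(x,1),(x,2))}$ — every term of $A$ carries an even number of $\sigma$'s — while it swaps $h\leftrightarrow h'$; thus the substantive content reduces to the nonvanishing of the bulk one-site magnetization $c_1-\tfrac12$.) I expect this simplification and positivity check to be the only genuine obstacle, the rest being soft. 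Finally, set $\ve_0:=\tfrac12(c_1-c_2)>0$. Given any finite $\Lambda\subset L$, pick $m$ with $\Lambda\subseteq\Lambda_m$ and then $n>m$ large enough that $|\varphi_i(E_{\Lambda_n})-c_i|<\ve_0/2$ for $i=1,2$. Then $a:=E_{\Lambda_n}$ lies in $\cb_{\{x_{W_n}^{(1)}\}}$ with $\{x_{W_n}^{(1)}\}\cap\Lambda=\emptyset$, satisfies $\|a\|=1$, and $|\varphi_1(a)-\varphi_2(a)|>(c_1-c_2)-\ve_0=\ve_0=\ve_0\|a\|$. This contradicts Theorem~\ref{br-q}, whence $\varphi_1$ and $\varphi_2$ are not quasi-equivalent.
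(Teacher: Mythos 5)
Your argument is essentially the paper's own proof: the same factor-state criterion (Theorem~\ref{br-q}) applied to the same single-site observables $E_{\Lambda_n}$, with the difference $|\varphi_1(E_{\Lambda_n})-\varphi_2(E_{\Lambda_n})|$ tending to the constant $\frac{\tau_3\xi_3(2\tau_2+\tau_3)}{3\tau_3-2\tau_1}>0$, which is exactly the paper's quantity $I_1$. The only deviations are cosmetic (you take limits rather than bounding at finite $n$, you correctly note $\|E_{\Lambda_n}\|=1$ where the paper miswrites $1/2$, and you add an unneeded but correct spin-flip symmetry remark), and the one step you leave as a ``routine simplification'' is asserted at the same level of detail in the paper itself.
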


\begin{proof}
For any  $\forall n\in\natural$ it is clear that
$E_{\Lambda_n}\in\mathcal{B}_{\Lambda_n}\setminus\mathcal{B}_{\Lambda_{n-1}}.$
Therefore, for any finite subset $\Lambda\in L$, there exists
$n_0\in\natural$ such that $\Lambda\subset\Lambda_{n_0}$.  Then
for all $n>n_0$ one has
$E_{\Lambda_n}\in\mathcal{B}_{\Lambda_n}\setminus\mathcal{B}_{\Lambda}.$
It is clear that
$$\|E_{\Lambda_n}\|=\|e_{1,1}^{x_{W_n}^{(1)}}\bigotimes_{y\in
L\setminus \{x_{W_n}^{(1)}\}}\id^{y}\|=\|e_{1,1}\|=\frac{1}{2}.$$
From Propositions \ref{6.6} and \ref{6.8} we obtain
\begin{eqnarray*}
\left|\varphi_1(E_{\Lambda_n})-\varphi_1(E_{\Lambda_n})\right|
&=&\frac{1}{2}\bigg|\big[(\xi_0+\xi_3)(\xi_0\tau_1+\xi_3\tau_2)\hat{\rho}_{1}
+\tau_3(\xi_0+\xi_3)^{2}\check{\rho}_{1}\big]\\[2mm]
&&-\big[(\xi_0-\xi_3)(\xi_0\tau_1-\xi_3\tau_2)\hat{\pi}_{1}
+\tau_3(\xi_0-\xi_3)^{2}\check{\pi}_{1}\big]\nonumber\\
&&+\bigg(\big[(\xi_0+\xi_3)(\xi_0\tau_2+\xi_3\tau_2)\hat{\rho}_{2}+\tau_3(\xi_0+\xi_3)^{2}\check{\rho}_{2}\big]\\[2mm]
&&-\big[(\xi_0-\xi_3)(\xi_0\tau_2-\xi_3\tau_2)\hat{\pi}_{2}+\tau_3(\xi_0-\xi_3)^{2}\check{\pi}_{2}\big]\bigg)
\left(\frac{\tau_1}{\tau_3}-\frac{1}{2}\right)^{n-1}\bigg|\nonumber\\
&\geq& I_1-I_2\left|\frac{\tau_1}{\tau_3}-\frac{1}{2}\right|^{n-1}
\end{eqnarray*}
where
\begin{eqnarray*}
I_1&=&\frac{1}{2}\bigg|\big[(\xi_0+\xi_3)(\xi_0\tau_1+\xi_3\tau_2)\hat{\rho}_{1}+\tau_3(\xi_0+\xi_3)^{2}\check{\rho}_{1}\big]\\[2mm]
&&-\big[(\xi_0-\xi_3)(\xi_0\tau_1-\xi_3\tau_2)\hat{\pi}_{1}+\tau_3(\xi_0-\xi_3)^{2}\check{\pi}_{1}\big]\bigg|\\[2mm]
I_2&=&\frac{1}{2}\bigg|\big[(\xi_0+\xi_3)(\xi_0\tau_2+\xi_3\tau_2)\hat{\rho}_{2}+\tau_3(\xi_0+\xi_3)^{2}\check{\rho}_{2}\big]\\[2mm]
&&-\big[(\xi_0-\xi_3)(\xi_0\tau_2-\xi_3\tau_2)\hat{\pi}_{2}+\tau_3(\xi_0-\xi_3)^{2}\check{\pi}_{2}\big]\bigg|.
\end{eqnarray*}

Due to $\beta> 0, \theta=\exp2\beta>1$,  $\tau_1>0$,$\tau_3>0,
\xi_0>, \xi_3>0$, one can find that
 \begin{eqnarray*}
I_1=\frac{\tau_3\xi_3(2\tau_2+\tau_3)}{3\tau_3-2\tau_1}>0.
 \end{eqnarray*}
Now keeping in mind $0<\tau_1\leq \tau_3$ we have
\begin{eqnarray*}
\bigg|\frac{\tau_1}{\tau_3}-\frac{1}{2}\bigg|\leq \frac{1}{2}
\end{eqnarray*}
which yields
\begin{equation*}
I_2\left|\frac{\tau_1}{\tau_3}-\frac{1}{2}\right|^{n-1}\rightarrow
0 \ \ \textrm{as} \ \ n\rightarrow +\infty.
\end{equation*}
Then there exists $n_1\in \bn$ such that $\forall n\geq n_0$ one
has
\begin{equation*}
I_2\left|\frac{\tau_1}{\tau_3}-\frac{1}{2}\right|^{n}\leq
\frac{\varepsilon_1 }{2}.
\end{equation*}
Hence, for all $n\geq n_1$ we obtain
\begin{equation*}
\left|\varphi_1(E_{\Lambda_n})-\varphi_1(E_{\Lambda_n})\right|\geq
\frac{\varepsilon_1}{2}=\varepsilon_1\|E_{\Lambda_n}\|.
\end{equation*}
 This, according to Theorem \ref{br-q}, means that the states
 $\phi_1$ and $\phi_2$ are not quasi-equivalent.
The proof is complete.
\end{proof}

Now Theorems \ref{5.3},\ref{6.2} and \ref{6.9} imply Theorem
\ref{Main}.

%%%%%%%%%%%%%%%%%%%%%%%%%%%%%%%%%%%%%%%%%%%%%%%%%%%%%%%%%%%%%%%%%%%%%%%%%%%%%%%%%%%%%%%%%%%%%%%%%%%%%%%%
\section{Proof of Theorem \ref{main2}}
%%%%%%%%%%%%%%%%%%%%%%%%%%%%%%%%%%%%%%%%%%%%%%%%%%%%%%%%%%%%%%%%%%%%%%%%%%%%%%%%%%%%%%%%%%%%%%%%%%%%%%%

In this section we prove Theorem \ref{main2}. In this section we
assume that $\Delta(\theta)>0$ which ensures the existence of the
state $\varphi_1$ (see Theorem \ref{5.3}).

For each $n\in\bn$ let us define the following element:
\begin{equation}
 a_{\sigma}^{\Lambda_{n}}=\sigma^{(W_{n+1}(1))}.
\end{equation}
Clearly, one has $a_{\sigma}^{\Lambda_{n}}\in
\mathcal{B}_{\Lambda_{n}}$.

\begin{proposition}\label{7.1}
Let $\varphi_{\alpha}$ be the QMC associate to the pair
$(\omega_{\alpha},h_{\alpha})$.  Then one has
$$\varphi_{\alpha}(a_{\sigma}^{\Lambda_{n}})=0.$$
\end{proposition}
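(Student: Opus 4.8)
The plan is to use the explicit formula for $\varphi_\alpha$ from Proposition \ref{5.1}, namely
$$
\varphi_{\alpha}(a)=\alpha^{2^{n}-1}\Tr\bigg(a\prod_{i=0}^{n-1}K_{[i,i+1]}K_{[i,i+1]}^{*}\bigg),\qquad a\in \cb_{\Lambda_n},
$$
and to observe that the element $a_{\sigma}^{\Lambda_n}=\sigma^{(W_{n+1}(1))}$ actually lives at level $n+1$, hence one should apply the formula at level $n+1$ (or, equivalently, first compute the partial trace over $W_{n+1}$). First I would reduce the trace over $\cb_{\Lambda_{n+1}}$ to a trace over $\cb_{\Lambda_n}$ by partially tracing out $W_{n+1}$: since $a_{\sigma}^{\Lambda_n}$ sits on a single site $x_{W_{n+1}}^{(1)}$, and $\prod_{i=0}^{n}K_{[i,i+1]}K_{[i,i+1]}^{*}$ contains exactly one factor $A_{(x,(x,1),(x,2))}A_{(x,(x,1),(x,2))}^{*}$ touching that site (with $x$ the predecessor of $x_{W_{n+1}}^{(1)}$ in $W_n$), the partial trace over the two sites $x_{W_{n+1}}^{(1)}, x_{W_{n+1}}^{(2)}$ reduces to computing
$$
\Tr_{x]}\Big(A_{(x,(x,1),(x,2))}\big(\id^{(x)}\otimes\sigma^{(x,1)}\otimes\id^{(x,2)}\big)A_{(x,(x,1),(x,2))}^{*}\Big),
$$
using the explicit form \eqref{Ax} of $A_{(x,(x,1),(x,2))}$.

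The key step is then a direct computation of this partial trace. Using \eqref{Ax}, expanding the product, and keeping only the terms whose partial trace over the successor sites is nonzero, one finds that the result is proportional to $\sigma^{(x)}$ with a coefficient that turns out to vanish: the relevant cross-terms are $\gamma\delta$ and $\delta\eta$ contributions, and the structure of \eqref{Ax} (the $\delta$ term appears symmetrically on the two successor legs while $\sigma^{(x,1)}$ breaks that symmetry in a way that produces a cancellation). More conceptually, one can argue by the spin-flip symmetry: the QMC $\varphi_\alpha$ is built from the diagonal, spin-flip-invariant boundary condition $h_\alpha=\alpha\id$ and $\omega_0=\alpha^{-1}\id$, both invariant under conjugation by $\bigotimes_x\sigma^{(x)}$, while $A_{(x,(x,1),(x,2))}$ commutes with this global flip; since $a_{\sigma}^{\Lambda_n}=\sigma^{(W_{n+1}(1))}$ is \emph{odd} under the flip, its expectation must be zero. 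I would present this symmetry argument as the main line and back it up with the explicit partial-trace computation.

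The main obstacle I anticipate is purely bookkeeping: making sure the site $W_{n+1}(1)=x_{W_{n+1}}^{(1)}$ is correctly identified as a direct successor of $x_{W_n}^{(1)}$, and that after tracing out all of $W_{n+1}$ except via the one relevant $A$-factor, the remaining expression is $\alpha^{2^{n+1}-1}$ times a trace of (something proportional to $\sigma^{(x_{W_n}^{(1)})}$ times operators supported on $\Lambda_n$) — which, once one knows the coefficient of $\sigma^{(x)}$ is zero, gives $0$ immediately without needing to evaluate the rest. Thus the whole proof collapses to: (1) apply Proposition \ref{5.1} at level $n+1$; (2) isolate the unique $A$-factor meeting the site $W_{n+1}(1)$; (3) compute $\Tr_{x]}\big(A(\id\otimes\sigma\otimes\id)A^{*}\big)$ and check its $\sigma^{(x)}$-coefficient vanishes (equivalently, invoke spin-flip parity); (4) conclude $\varphi_\alpha(a_\sigma^{\Lambda_n})=0$.
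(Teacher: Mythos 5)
Your spin-flip parity argument is correct, and it is a genuinely different (and cleaner) route than the paper's. Conjugation by $U=\bigotimes_{x}u^{(x)}$ with $u=\left(\begin{smallmatrix}0&1\\1&0\end{smallmatrix}\right)$ sends $\sigma\mapsto -\sigma$ and fixes $\id$; since every $K_{<x,y>}$, $L_{>x,y<}$ and hence every $A_{x,(x,1),(x,2)}$ is a sum of terms containing an \emph{even} number of $\sigma$'s, they are all fixed by $U$, as are $\omega_0=\alpha^{-1}\id$ and $h_\alpha=\alpha\id$. Therefore the density $\mathcal{W}_{n]}$ is $U$-invariant, while $a_{\sigma}^{\Lambda_n}$ is odd, so $\varphi_\alpha(a_{\sigma}^{\Lambda_n})=-\varphi_\alpha(a_{\sigma}^{\Lambda_n})=0$. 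Note this uses crucially that the boundary condition is the scalar $h_\alpha$; for $\varphi_1$, $\varphi_2$ the symmetry is broken by $h=\xi_0\id\pm\xi_3\sigma$, which is consistent with Proposition \ref{7.2}. If you present this, spell out why $\varphi_\alpha(UaU^*)=\varphi_\alpha(a)$ (cyclicity of the trace against the $U$-invariant $\mathcal{W}_{n]}$, using Corollary \ref{state^nwithW_n}).

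However, the explicit computation you offer as the ``key step'' and as backup is wrong, and you should not rely on it. The coefficient of $\sigma^{(x)}$ in $\Tr_{x]}\big(A_{(x,(x,1),(x,2))}(\id^{(x)}\otimes\sigma^{(x,1)}\otimes\id^{(x,2)})A_{(x,(x,1),(x,2))}^{*}\big)$ does \emph{not} vanish: expanding \eqref{Ax}, the surviving cross terms are the $(\gamma,\delta)$ and $(\delta,\eta)$ pairs, and they \emph{add} rather than cancel, giving $2\delta(\gamma+\eta)\,\sigma^{(x)}=\tfrac{\tau_3}{2}\sigma^{(x)}$ with $\tau_3=\tfrac12\theta^{J}(\theta^{2}-1)>0$ for $\theta>1$. (This matches the second displayed identity in the proof of Lemma \ref{f-p-11}, where the $\sigma^{(x)}$-coefficient is $\tfrac12=\xi_0\tau_3/2$, not $0$.) The actual mechanism, and the one the paper uses, is a recursion: the partial trace over one generation reproduces $\tfrac{\tau_3\alpha}{2}\,a_{\sigma}^{\Lambda_{n-1}}$ one level down, and only after iterating all the way to the root does the expectation die, via $\Tr(\omega_0\sigma)=0$. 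So either keep the symmetry argument alone, or replace the claimed local cancellation at step (3) of your plan by this root-level cancellation; as written, step (3) would fail.
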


\begin{proof}
According to Proposition \ref{5.1} we have
\begin{eqnarray}\label{7.11}
\varphi_{\alpha}(a_{\sigma}^{\Lambda_{n}})
=\Tr\omega_{0}K_{[0,1]}\cdots
K_{[n-2,n-1]}Tr_{n-1]}\big[K_{[n-1,n]}a_{\sigma}^{\Lambda_{n}}\bh_{\alpha,
n}K_{[n-1,n]}^{*}\big] K_{[n-2,n-1]}^{*}\cdots K_{[0,1]}^{*}.
\end{eqnarray}
One can calculate that
\begin{eqnarray*}
\Tr_{n]}\big[K_{[n-1,n]}a_{\sigma}^{\Lambda_n}K_{[n-1,n]}^{*}\big]
&=&\Tr_{W_{n}(1)\big]}\big[A_{(x,(x,1),(x,2))}(\sigma h_{\alpha}^{(x,1)}\otimes h_{\alpha}^{(x,2)})A_{(x,(x,1),(x,2))}^{*}\big] \nonumber\\
&&\otimes\bigotimes_{x\in W_n\setminus W_n(1)}Tr_{x]}\big[A_{(x,(x,1),(x,2))}h_{\alpha}^{(x,1)}\otimes h_{\alpha}^{(x,2)})A_{(x,(x,1),(x,2))}^{*}\big]\nonumber\\
&=&\frac{1}{2}\tau_3\alpha\sigma^{W_{n-1}(1)}\bh_{\alpha, n-1}\nonumber\\
&=&\frac{1}{2}\tau_3\alpha a_{\sigma}^{\Lambda_{n-1}}\bh_{\alpha,
n-1}
\end{eqnarray*}
Therefore, from \eqref{7.11} with the last equality we obtain
\begin{eqnarray*}
\varphi_{\alpha}(a_{\sigma}^{\Lambda_{n}})&=&\frac{\tau_3\alpha}{2}\varphi_{\alpha}(a_{\sigma}^{\Lambda_{n-1}})\nonumber\\
&\vdots&\nonumber\\
&=&(\frac{\tau_3\alpha}{2})^{n}\varphi_{\alpha, 0}(a_{\sigma}^{\Lambda_{0}})\nonumber\\
&=&(\frac{\tau_3\alpha}{2})^{n}\Tr(\omega_{0}\sigma)=0.
\end{eqnarray*}
This completes the proof.
\end{proof}

Now using the same argument as in the proof of Proposition
\ref{6.6} one can prove the following

\begin{proposition}\label{7.2} For each $n\in\bn$ one has
\begin{eqnarray*}
\varphi_{1}(a_{\sigma}^{\Lambda_{n+1}})=\left[\left(\tau_1+\tau_2\right)\xi_0\xi_3\hat{\rho}_{1}+\frac{\tau_3}{2}\left(\xi_{0}^{2}+\xi_{3}^{2}\right)\check{\rho}_1\right]
+\left[\left(\tau_1+\tau_2\right)\xi_0\xi_3\hat{\rho}_{2}+\frac{\tau_3}{2}\left(\xi_{0}^{2}+\xi_{3}^{2}\right)\check{\rho}_2\right]\left(\frac{\tau_1}{\tau_3}-\frac{1}{2}\right)^{n}
\end{eqnarray*}
\end{proposition}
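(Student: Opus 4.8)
The plan is to compute $\varphi_1(a_\sigma^{\Lambda_{n+1}})$ by the same peeling procedure used in Proposition \ref{6.6}, taking advantage of the fact that $a_\sigma^{\Lambda_{n+1}}=\sigma^{(x^{(1)}_{W_{n+2}})}$ is supported on the first vertex of level $W_{n+2}$, hence sits in $\cb_{\Lambda_{n+1}}\setminus\cb_{\Lambda_n}$. First I would write, using \eqref{F1},
$$
\varphi_1(a_\sigma^{\Lambda_{n+1}})=\Tr\left[\omega_0 K_{[0,1]}\cdots K_{[n,n+1]}\,\Tr_{n]}\!\big(K_{[n+1,n+2]}\bh_{n+1}a_\sigma^{\Lambda_{n+1}}K_{[n+1,n+2]}^{*}\big)\,K_{[n,n+1]}^{*}\cdots K_{[0,1]}^{*}\right],
$$
wait—more precisely one peels the outermost block $W_{n+2}$ and is left with a partial trace over that level acting on the factor $\bh_{n+1}$ decorated by $\sigma$ at the distinguished site. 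The key computation is the single-vertex identity: for the vertex $x$ whose successors carry $\sigma h\otimes h$ (instead of $h\otimes h$), one has
$$
\Tr_{x]}\!\big[A_{(x,(x,1),(x,2))}(\sigma h^{(x,1)}\otimes h^{(x,2)})A_{(x,(x,1),(x,2))}^{*}\big]=\big(\tfrac{\tau_1+\tau_2}{2}\big)\xi_0\xi_3\,\id^{(x)}+\tfrac{\tau_3}{4}(\xi_0^2+\xi_3^2)\,\sigma^{(x)},
$$
or whatever the precise scalars turn out to be, obtained from \eqref{Ax} together with $h=\xi_0\id+\xi_3\sigma$, $he_{11}=(\xi_0+\xi_3)e_{11}$, and the trace rules $\Tr(h)$, $\Tr(\sigma h)$, $\Tr(\sigma h\sigma)=\Tr(h)$, $\sigma^2=\id$. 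At the other sites of $W_{n+1}$ one uses the already-established \eqref{eq2}-type identity $\Tr_{x]}[A(h\otimes h)A^{*}]=h$; I should double-check that this is literally what the peeling produces, since the block structure couples a vertex of $W_{n+1}$ to its two successors in $W_{n+2}$, so the "unperturbed" sites reproduce $h$ and only the distinguished ternary unit contributes the $\hat{\bh}$, $\check{\bh}$ combination.

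The outcome of one peeling step is therefore a linear combination
$$
\Tr_{n]}\!\big(K_{[n+1,n+2]}\bh_{n+1}a_\sigma^{\Lambda_{n+1}}K_{[n+1,n+2]}^{*}\big)=c_1\,\hat{\bh}_{n}+c_2\,\check{\bh}_{n},
$$
with $c_1,c_2$ the explicit scalars above, in exactly the notation $\hat{\bh}_n$, $\check{\bh}_n$ introduced before Lemma \ref{f-p-11}. Plugging this back and using the definitions of $\hat\psi_n$, $\check\psi_n$ from that lemma gives
$$
\varphi_1(a_\sigma^{\Lambda_{n+1}})=c_1\,\hat\psi_n+c_2\,\check\psi_n,
$$
and then Lemma \ref{f-p-11}'s formulas $\hat\psi_n=\hat\rho_1+\hat\rho_2(\tfrac{\tau_1}{\tau_3}-\tfrac12)^{n}$, $\check\psi_n=\check\rho_1+\check\rho_2(\tfrac{\tau_1}{\tau_3}-\tfrac12)^{n}$ deliver the claimed closed form with the two bracketed coefficients in the Proposition being $c_1\hat\rho_1+c_2\check\rho_1$ and $c_1\hat\rho_2+c_2\check\rho_2$ respectively; matching constants identifies $c_1=(\tau_1+\tau_2)\xi_0\xi_3/(\text{something})$ and $c_2=\tfrac{\tau_3}{2}(\xi_0^2+\xi_3^2)$ up to the overall normalization already absorbed in the statement.

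The main obstacle I anticipate is bookkeeping rather than conceptual: correctly carrying out the single partial trace $\Tr_{x]}$ over a ternary block when one of the three tensor legs carries $\sigma h$ instead of $h$, keeping track of which terms of \eqref{Ax} survive and what scalar each contributes. This is the same mechanism as in Proposition \ref{6.6}, where $e_{11}h$ appeared on a leg; here $\sigma h$ appears, and the cross-terms $\delta\sigma\otimes\sigma h\otimes\cdot$ and $\eta\id\otimes\sigma h\otimes\cdot$ etc. mix differently, so I would recompute that one $4\times4$-ish trace carefully and then the rest is a direct substitution into Lemma \ref{f-p-11}. A secondary point to verify is that the distinguished vertex of $W_{n+2}$ really is a successor of the distinguished vertex of $W_{n+1}$ under the ordering \eqref{xw} — it is, since $x^{(1)}_{W_{n+2}}=(1,\dots,1)$ is in $S(x^{(1)}_{W_{n+1}})$ — so the peeling at every subsequent level proceeds exactly through the $\hat{\bh},\check{\bh}$ recursion of Lemma \ref{f-p-11} with no further $\sigma$ insertions, which is why the recursion matrix $N$ and its eigenvalue $\tfrac{\tau_1}{\tau_3}-\tfrac12$ reappear verbatim.
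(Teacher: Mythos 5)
Your proposal follows exactly the route the paper intends: its "proof" of Proposition \ref{7.2} is just the one-line remark that one repeats the argument of Proposition \ref{6.6}, i.e.\ peel the outermost ternary block, reduce to a combination $c_1\hat{\bh}+c_2\check{\bh}$, and feed it into the recursion of Lemma \ref{f-p-11}. The scalars you left undetermined do come out right: since $\Tr(\sigma h)=\xi_3$ and $\Tr(\sigma\cdot\sigma h)=\xi_0$, the general identity $\Tr_{x]}[A(\id\otimes g_1\otimes g_2)A^{*}]=[\tau_1\Tr(g_1)\Tr(g_2)+\tau_2\Tr(\sigma g_1)\Tr(\sigma g_2)]\id+\tfrac{\tau_3}{2}[\Tr(g_1)\Tr(\sigma g_2)+\Tr(\sigma g_1)\Tr(g_2)]\sigma$ gives $c_1=(\tau_1+\tau_2)\xi_0\xi_3$ and $c_2=\tfrac{\tau_3}{2}(\xi_0^2+\xi_3^2)$, exactly the bracketed coefficients in the statement.
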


Now we are ready to prove Theorem \ref{main2}.

\begin{proof} First note that the state $\varphi_\a$ is a factor state, since it also satisfies the mixing property like the state $\varphi_1$.

Now from $0<\tau_1<\tau_3$ one finds
$|\frac{\tau_1}{\tau_3}-1|<\frac{1}{2}$ which yields
$$
\left(\frac{\tau_1}{\tau_3}-1\right)^{n}\to 0.$$

Due to Propositions \ref{7.1} and \ref{7.2} we get
\begin{eqnarray}\label{722}
|\varphi_{\alpha}(a_{\sigma}^{\Lambda_{n+1}})-{\varphi_{1}}(a_{\sigma}^{\Lambda_{n+1}})|
&=&|\varphi_{1}(a_{\sigma}^{\Lambda_{n+1}})|\nonumber\\
&=&|[(\tau_1+\tau_2)\xi_0\xi_3\hat{\rho}_{1}+\frac{1}{2}\tau_3(\xi_{0}^{2}+\xi_{3}^{2})\check{\rho}_1]\nonumber\\
&&+[(\tau_1+\tau_2)\xi_0\xi_3\hat{\rho}_{2}+\frac{1}{2}\tau_3(\xi_{0}^{2}+\xi_{3}^{2})\check{\rho}_2](\frac{\tau_1}{\tau_3}-\frac{1}{2})^{n}|\nonumber\\
&\geq&|(\tau_1+\tau_2)\xi_0\xi_3\hat{\rho}_{1}+\frac{1}{2}\tau_3(\xi_{0}^{2}+\xi_{3}^{2})\check{\rho}_1|\nonumber\\
&&-\bigg|(\tau_1+\tau_2)\xi_0\xi_3\hat{\rho}_{2}+\frac{1}{2}
\tau_3(\xi_{0}^{2}+\xi_{3}^{2})\check{\rho}_2\left(\frac{\tau_1}{\tau_3}-\frac{1}{2}\right)^{n}\bigg|,
\end{eqnarray}
here $\hat{\rho_1},\hat{\rho_2}$ and
$\check{\rho}_1,\check{\rho}_2$ are defined by
\eqref{r1},\eqref{r2}, respectively.

 We can check that
$\varepsilon_0:=\left[(\tau_1+\tau_2)\xi_0\xi_3\hat{\rho}_{1}+\frac{1}{2}\tau_3(\xi_{0}^{2}+\xi_{3}^{2})\check{\rho}_1\right]>0$.
Therefore, we have
\begin{eqnarray*}
|(\tau_1+\tau_2)\xi_0\xi_3\hat{\rho}_{2}+\frac{1}{2}\tau_3(\xi_{0}^{2}+\xi_{3}^{2})\check{\rho}_2
\left(\frac{\tau_1}{\tau_3}-\frac{1}{2}\right)^{n}|\rightarrow0.
\end{eqnarray*}
This means that there is $n_0\in\bn$ such that for all $n\geq n_0$
one gets
$$
\bigg|(\tau_1+\tau_2)\xi_0\xi_3\hat{\rho}_{2}+\frac{1}{2}\tau_3(\xi_{0}^{2}+\xi_{3}^{2})\check{\rho}_2\left(\frac{\tau_1}{\tau_3}-\frac{1}{2}\right)^{n}\bigg|\leq
\varepsilon_0/2.
$$
This due to  \eqref{722} implies
\begin{eqnarray*}
|\varphi_{\alpha}(a_{\sigma}^{\Lambda_{n+1}})-\varphi_{1}(a_{\sigma}^{\Lambda_{n+1}})|
\geq \frac{\varepsilon_{0}}{2}
\end{eqnarray*}
for all $n\geq n_0$.

For $\varepsilon=\frac{\varepsilon_{0}}{2}$, and
$\Lambda\subset_{fin} L$, there exists $n_1\in\bn$  such that
$\Lambda\subset \Lambda_{n_1}$
\begin{eqnarray*}
|\varphi_{\alpha}(a_{\sigma}^{\Lambda_{n+1}})-\varphi_{1}(a_{\sigma}^{\Lambda_{n+1}})|
\geq \varepsilon
\end{eqnarray*}
for all $n\geq \max\{n_0,n_1\}$. This from Theorem \ref{br-q} gets
the desired statement.
\end{proof}

\section*{Acknowledgments}
The authors are grateful to professors L. Accardi and F. Fidaleo
for their fruitful discussions and useful suggestions on the
definition of the phase transition. The authors also thank
referees whose valuable comments and remarks improved the
presentation of this paper.

\end{document}